\documentclass[acmsmall,10pt]{acmart}
\settopmatter{printfolios=true,printccs=false,printacmref=false}

\acmJournal{TOPLAS}
\acmYear{2021}
\acmISBN{} \acmDOI{} \startPage{1}

\setcopyright{none}

\bibliographystyle{ACM-Reference-Format}

\usepackage{booktabs}   \usepackage{subcaption} 

\usepackage{amsmath}
 \usepackage{amssymb}
\usepackage{bm}
\usepackage{mathtools}
\usepackage[nocomma]{optidef}
\usepackage{algorithm}
\usepackage{enumitem}
\usepackage{algorithmicx}
\usepackage{algpseudocode}
\usepackage[symbol]{footmisc}
\usepackage{xcolor}
\usepackage[frozencache]{minted}

\usepackage{enumitem}
\newlist{steps}{enumerate}{1}
\setlist[steps, 1]{label = \textbf{Step} \arabic*:}
\newcommand{\oomit}[1]{}

\DeclarePairedDelimiter\norm{\lVert}{\rVert}

\newtheorem{lemma}{Lemma}

\newtheorem{theorem}{Theorem}
\newtheorem{remark}{Remark}
\newtheorem{definition}{Definition}
\newtheorem{example}{Example}

\begin{document}

\title[Synthesizing Invariant Clusters for Polynomial Programs by SDP]{Synthesizing Invariant Clusters for Polynomial Programs by Semidefinite Programming}

\author{Qiuye Wang}
\orcid{0000-0001-5138-3273}
\affiliation{
    \department{State Key Lab. of Computer Science}
    \institution{Institute of Software, Chinese Academy of Sciences}
    \institution{University of CAS}
    \city{Beijing}
    \country{China}
}
\email{wangqye@ios.ac.cn}

\author{Bai Xue}
\affiliation{
    \department{State Key Lab. of Computer Science}
    \institution{Institute of Software, Chinese Academy of Sciences}
    \institution{University of CAS}
    \city{Beijing}
    \country{China}
}
\email{xuebai@ios.ac.cn}

\author{Naijun Zhan}
\affiliation{
    \department{State Key Lab. of Computer Science}
    \institution{Institute of Software, Chinese Academy of Sciences}
    \institution{University of CAS}
    \city{Beijing}
    \country{China}
}
\email{znj@ios.ac.cn}

\author{Lihong Zhi}
\affiliation{ 
    \department{Mathematics Mechanization Research Center}
    \institution{Institute of System Science, Academy of Mathematics and System Sciences, Academia Sinica}
    \city{Beijing}
    \country{China}
}
\email{lzhi@mmrc.iss.ac.cn}

\author{Zhi-Hong Yang}
\affiliation{
    \department{Department of Mathematics}
    \institution{NC State University}
    \state{NC}
    \country{USA}
}
\additionalaffiliation{
    \department{Department of Computer Science}
    \institution{Duke University}
    \state{NC}
    \country{USA}
}
\email{zhyang@amss.ac.edu}

\begin{abstract}

    In this paper, we present a novel approach 
    to synthesize \emph{invariant clusters} for polynomial programs. 
    An invariant cluster is a set of program invariants that share a common structure, 
    which could, for example, be used to save the needs for repeatedly synthesizing new invariants 
    when the specifications and programs are evolving. 
    To that end, we search for sets of parameters $R_k$ 
    w.r.t. a parameterized multivariate polynomial $I(\bm{a}, \bm{x})$ (i.e. a \emph{template}) 
    such that $I(\bm{a}, \bm{x}) \leq 0$ is a valid program invariant for all $\bm{a} \in R_k$. 
    Instead of using time-consuming symbolic routines such as quantifier eliminations, 
    we show that such sets of parameters can be synthesized 
    using a hierarchy of semidefinite programming (SDP). 
    Moreover, we show that, under some standard non-degenerate assumptions, 
    \emph{almost all} possible valid parameters can be included in the synthesized sets. 
    Such kind of completeness result has previously only been provided by symbolic approaches. 
    Further extensions such as 
    using semialgebraic and general algebraic templates (instead of polynomial ones) 
    and allowing non-polynomial continuous functions in programs are also discussed.  
\end{abstract}

\begin{CCSXML}
<ccs2012>
<concept>
<concept_id>10003752.10010124.10010138.10010139</concept_id>
<concept_desc>Theory of computation~Invariants</concept_desc>
<concept_significance>500</concept_significance>
</concept>
<concept>
<concept_id>10003752.10010124.10010138.10010142</concept_id>
<concept_desc>Theory of computation~Program verification</concept_desc>
<concept_significance>300</concept_significance>
</concept>
<concept>
<concept_id>10003752.10003790.10002990</concept_id>
<concept_desc>Theory of computation~Logic and verification</concept_desc>
<concept_significance>100</concept_significance>
</concept>
<concept>
<concept_id>10002950.10003714.10003716.10011138.10010042</concept_id>
<concept_desc>Mathematics of computing~Semidefinite programming</concept_desc>
<concept_significance>300</concept_significance>
</concept>
</ccs2012>
\end{CCSXML}

\ccsdesc[500]{Theory of computation~Invariants}
\ccsdesc[300]{Theory of computation~Program verification}
\ccsdesc[100]{Theory of computation~Logic and verification}
\ccsdesc[300]{Mathematics of computing~Semidefinite programming}

\keywords{invariant synthesis, semidefinite programming, program verification}

\maketitle

\section{Introduction}
\label{sec:introduction}

The dominant approach to program verification 
is \emph{Floyd-Hoare-Naur's inductive assertion method} \cite{floyd67,hoare69,naur66}, 
which is based on Hoare Logic \cite{hoare69}. 
The hardest parts thereof are \emph{invariant generation} and \emph{termination analysis}. 
Powerful as this approach may be in theory, 
it has only limited success in the early stage \cite{wegbreit74, gw75, km76, karr76} 
as in practice it depends on what kind of invariants can be generated, 
and back then automatic generating only gives some special linear invariants. 

In the past two decades, safety-critical systems have become important parts of our life, 
a thorough validation and verification is necessary 
to enhance the quality of software used in these systems, 
and in particular, to fulfill the quality criteria mandated by relevant standards. 
This motivates many theories and computation approaches to synthesize better invariants, 
which have been successfully applied in the verification of programs and hybrid systems 
\cite{rck04a, ijcar16, Garg16, kincaid2017nonlinear, joel2018polynomial, Sankaranarayanan04, Kapur06}. 

Sometimes, it would often be beneficial to synthesize not only one valid invariant 
as the witness of system safety, 
but also a set of valid invariants with similar structures 
(that is, an \emph{invariant cluster} \cite{kong2017clusters} 
\footnote{
    The concept of invariant clusters is presented in \cite{kong2017clusters} 
    for hybrid systems (with continuous dynamics). 
    The idea of synthesizing a group of different program invariants, 
    has touched by many existing works, e.g. \cite{kapur2007generating, joel2018polynomial}. 
}
). 
For example, when the specifications and programs are evolving, 
the previous invariant may become inadequate 
to assure the validity of a slightly evolved version of a Hoare triple. 
The obvious way is to synthesize new invariants for every updates, 
which creates a heavy burden for updating. 
However, if an invariant cluster is available, 
one can simply try to draw another invariant from that cluster 
to fit the updated specifications and programs, 
which is generally much easier than doing the synthesizing procedure all over again. 
This idea of invariant clusters could be particular useful 
when one needs to analyse evolving programs 
where the \emph{reuse} of previous results is of central concerns \cite{oopsla2020termination}. 

Given an invariant template $I(\bm{a}, \bm{x})$, 
the condition for $I(\bm{a}, \bm{x}) \leq 0$ to be an invariant 
can be written as a quantified first order logic formula. 
If the template $I(\bm{a}, \bm{x})$, the program and specifications are all polynomial, 
Synthesizing an (in fact, the largest possible) invariant cluster can then be done by 
eliminating the quantifiers of this formula \cite{Kapur06}. 
The result will be a quantifier-free formula on the parameters $\bm{a}$, 
representing the constraints needed to be satisfied 
if $I(\bm{a}, \bm{x})$ are to be an invariant.  
However, quantifier elimination procedures are extremely time-consuming 
(with up to double-exponential time complexity), 
which means the above approach is only applicable in some tiny examples. 

In this paper, we present a novel approach to synthesize invariant clusters for polynomial programs. 
The key observation here is that we do not always need the \emph{largest possible} invariant cluster 
(or equivalently, the exact valid set $R_I$ containing \emph{all} valid parameters). 
Rather, sometimes it is enough to have underapproximations of $R_I$. 
From there on, we show that some underapproximations $R_k = \{ \bm{a} \mid p_i(\bm{a}) \leq 0 \}$ 
of the valid set $R_I$ can be synthesized 
by solving a hierarchy of semidefinite programmings (SDP) \cite{Lasserre10, Lasserre15}, 
where $\{ p_i \}$ is a series of polynomials with increasing degrees. 
We also show that, under some standard non-degenerate assumptions, 
the series of underapproximations $\{ R_k \}$ converges to the valid set $R_I$ 
w.r.t. Lebesgue measure as the relaxation degree increases. 
Therefore, by choosing a suitable relaxation degree, 
one can obtain an underapproximation of the valid set $R_I$ with desired precision. 

On top of the above algorithm that builds underapproximations by solving a SDP hierarchy, 
we also provide some additional techniques that increase its practical performance, 
such as adaptively partitioning of the parameter space 
and using alternative non-trivial polynomial basis. 
The issue of possible unsoundness caused by numerical errors in SDP is also discussed 
and several methods have been presented to ease its effects. 
Experiment results show that our approach is capable of 
synthesize invariant clusters that no previous methods could provide for some programs, 
and its performance is comparable with some state-of-the-art invariant synthesis methods. 

The main contributions of this paper are given as follows: 
\begin{enumerate}
    \item We presented the idea of invariant clusters in details, 
          which could be useful in analysing evolving programs. 
    \item We presented an algorithm to synthesize invariant clusters 
          based on Lasserre's SDP techniques presented in \cite{Lasserre15}. 
    \item We provided additional techniques that 
          increase the practical performance of this algorithm. 
\end{enumerate}

The rest of this paper is organized as follows: 
Section~\ref{sec:overview} gives an overview of our approach. 
In Section~\ref{sec:preliminaries}, 
we introduce some basic notions and tools that will be used later. 
Section~\ref{sec:underapproximation} is devoted to 
showing that the problem of synthesizing invariant clusters can be solved
using the techniques in \cite{Lasserre15}, which underapproximates a certain kind of sets. 
Section~\ref{sec:techniques} includes additional techniques 
that increase the practical performance of the underapproximation procedure. 
Some illustrative experiments are presented in Section~\ref{sec:experiments}, 
Section~\ref{sec:extension} discusses how to extend our approach 
to deal with more general programs with more expressive templates. 
Finally, we conclude this paper and discuss future work in Section~\ref{sec:conclusion}. 
 \section{Overview}\label{sec:overview}
In this section, we use a practical example to show that how our approach works.
\subsection{An Example}
The example program in Code~\ref{code:overview} is taken from \cite{DXZ13}. 
The variables $x$ and $y$ are assumed to be real-valued. 
One needs to verify the following safety property: 
if the initial state satisfies the precondition $x^2 + y^2 \leq 1$, 
then the postcondition $x^2 - 2 y^2 \leq 4$ holds whenever the loop terminates, 

\begin{listing}[!ht]
\begin{minted}[mathescape]{c}
    // Precondition: {$x^2 + y^2 \leq 1$} 
    while (x * x + y * y <= 3) {
        x = x * x + y - 1;
        y = x * y + y + 1;
    }
    // Postcondition: {$x^2 - 2 y^2 \leq 4$}
\end{minted}
\caption{The Overview Example}
\label{code:overview}
\end{listing}

It should be noted that even though this example seems simple enough, 
many existing works failed to synthesize even one suitable invariant for it 
\cite{joel2018polynomial, CSS03, kincaid2017nonlinear}, 
let alone an invariant cluster. 
On the other hand, directly applying symbolic constraint solving methods such as SMT solving 
is impractical for this nonlinear program 
due to the extremely high computational complexity (up to double exponential\cite{DH88}). 

\subsection{Verification by Invariants}
The safety property in Code~\ref{code:overview} can be proved 
by finding a suitable invariant $\phi(x, y)$ such that 
\begin{itemize}
    \item the precondition implies $\phi$, 
    \item the truth value of $\phi$ stay unchanged during execution, 
    \item when loop terminates, $\phi$ implies the postcondition. 
\end{itemize}

We employ the idea of \emph{template-based invariant synthesis} 
and take a template $\mathit{Inv}(a, b, x, y) = x^2 + a y^2 + b$ 
(as a simple heuristic, 
the template is taken to include all monomials in the postcondition). 
Now, it can be subsequently proved that 
$\mathit{Inv}(a, b, x, y)$ is an invariant if 
the following first-order formulas are satisfied 
(for more details, see Definition~\ref{def:inv}):
\begin{equation*}
\begin{aligned}
    C_1 := \forall x \ldotp \forall y \ldotp & (x^2 + y^2 - 1 \leq 0) \implies (x^2 + a y^2 + b \leq 0) \, ;  \\
    C_2 := \forall x \ldotp \forall y \ldotp &
((x^2 + y^2 - 3 \leq 0) \land (x^2 + a y^2 + b \leq 0)) \implies \\ 
                                      & (a x^4 y^2 + 2 a x^2 y^3 + a y^4 + x^4 + 2 a x^2 y + 
                                        2 x^2 y + 2 a y^2 - 2 x^2 + y^2 - 2 y + a + b + 1) \, ; \\
C_3 := \forall x \ldotp \forall y \ldotp & ((x^2 + y^2 \geq 3) \land (x^2 + a y^2 + c \leq 0)) \implies 
                                       (x^2 + y^2 - 4 \leq 0) \, .  
\end{aligned}
\end{equation*}

The obvious way of finding the explicit conditions on 
the parameters $a$ and $b$ is to eliminate the universal quantifiers in the above formulas. 
After that, one obtains an exact representation of the valid set $R_I$ 
(the set of all valid parameters). 
Alternatively, if only one invariant is needed, 
one may also employ SMT solvers to find a satisfiable assignment of the above formulas. 
However, due to the nonlinear nature and complex forms of these formulas, 
neither of the two methods can give an answer in a reasonable time 
(\texttt{Redlog}\cite{dolzmann1997} and \texttt{Z3}\cite{Z3} 
both failed to provide results after $24$ hours of running). 

\subsection{Underapproximation of $R_I$ using SDP}
Our main idea is to find \emph{underapproximations} of the formulas $C_1$, $C_2$ and $C_3$.
More specifically, we try to find some polynomials $p_1$, $p_2$, $p_3$ such that: 
\begin{equation*}
\begin{aligned}
    p_1(a, b) \leq 0 & \implies C_1 \, , \\
    p_2(a, b) \leq 0 & \implies C_2 \, ,  \\
    p_3(a, b) \leq 0 & \implies C_3 \, .
\end{aligned}
\end{equation*}

In other words, $R = \{ (a, b) \mid p_i(a, b) \leq 0, \, i = 1, 2, 3 \}$ 
is an invariant cluster and an underapproximation of the exact valid set $R_I$. 
if such $p_1$, $p_2$ and $p_3$ are found 
and there is an assignment $(a_0, b_0)$ 
such that $p_i(a_0, b_0) \leq 0$ for $i = 1, 2, 3$, 
then this assignment $(a_0, b_0)$ will be a valid assignment 
which makes $C_1$, $C_2$, $C_3$ satisfiable. 
Consequently, it can be used to extracted a valid invariant 
when substituted back to the invariant template $\textit{Inv}$. 

As shown in Section~\ref{sec:underapproximation}, 
those $p_i$ can be obtained by solving a series 
of sum-of-squares (SOS) relaxations of increasing degrees.
By solving these SOS programs numerically using SDP, 
we obtain a series of valid $p_i$ of increasing degrees. 
$p_i$ with higher degrees give tighter underapproximations 
but would need more computation efforts to obtain. 

In this particular example, the $p_i$ obtained 
by solving the degree $2$ SOS relaxation is: 
\begin{equation*}
\begin{aligned}
    p_1(a, b) = \vphantom{1} &0.9999105909378522 b + 0.9897744745712598 \, , \\
    p_2(a, b) = \vphantom{1} &0.18645935858312943 a^2 + 0.925510976100242 a + \phantom{1} \\
                &1.0000005956843994 b + 3.803019854318091 \, ,  \\
    p_3(a, b) = \vphantom{1} &0.9999979891677253 b - 4.009995082995367 \, .
\end{aligned}
\end{equation*}

\subsection{Using the Invariant Cluster} 
Now $(a_0, b_0) \mathbin{{:}{=}} (-2.000000, -3.959980)$ 
can be extracted from the above invariant cluster $R$ 
using numeric solvers as a valid assignment, 
which gives the following invariant candidate $\phi(x, y) = x^2 -2.000000 y^2 - 3.959980 \leq 0$.
In order to avoid the possible unsoundness caused by the numerical computation, 
the inductive invariance of this invariant candidate is verified by the SMT solver \texttt{Z3}. 

Furthermore, when the specifications and programs are changed slightly, 
the obtained invariant cluster enables us to find new invariants easily. 
For example, if the postcondition is changed to $\{x^2 - 2 y^2 \leq 3 \}$, 
then the above invariant $\phi(x, y)$ is no longer capable of proving the new postcondition. 
However, a new invariant $\phi'(x, y) = x^2 - 2 y^2 - 2.95383593 \leq 0$ 
can be directly extracted from the invariant cluster $R$ 
to re-establish the proof, without doing the invariant synthesis procedure from scratch. 

 \section{Preliminaries}
\label{sec:preliminaries}

\subsection{Basic Notions}
The following basic notations will be used throughout the rest of this paper: 
$\mathbb{R}$, $\mathbb{R}^+$ and $\mathbb{N}$ respectively stand for 
the set of real numbers, the set of positive real numbers 
and the set of non-negative integers. 
We use boldface letters to denote vectors (such as $\bm{x}$, $\bm{y}$) 
and vector-valued functions (such as $\bm{f}(\bm{x})$, $\bm{g}(\bm{x})$). 
If not explicitly stated otherwise, the comparison between vectors is pointwise 
(i,e, for $\bm{x}, \bm{y} \in \mathbb{R}^n$, $\bm{x} \geq \bm{y}$ 
means $x_1 \geq y_1 \land x_2 \geq y_2 \land \dots \land x_n \geq y_n$). 
$U(\bm{x}_0, \delta)$ denotes the $\delta$ neighbor of $\bm{x}_0$, 
i.e. $U(\bm{x}_0, \delta) = \{ \bm{x} \mid \norm{\bm{x} - \bm{x}_0} \leq \delta\}$. 
$\mathbb{R}[\cdot]$ denotes the ring of polynomials 
in variables given by the argument, 
$\mathbb{R}^d[\cdot]$ denotes the set of polynomials 
of degree less than or equal to $d$ 
in variables given by the argument, where $d \in \mathbb{N}$. 
Similarly, $\Sigma[\cdot]$ and $\Sigma^{d}[\cdot]$ denote 
the set of sum-of-squares polynomials 
and sum-of-squares polynomials of degree less than or equal to $d$, respectively. 
For convenience, we do not explicitly distinguish 
a polynomial $p \in \mathbb{R}[\bm{x}]$ and the function it introduces.

A basic semialgebraic set $\mathcal{A}$ is of the form 
$\{\bm{x} \mid p_1(\bm{x}) \triangleright 0 \wedge \ldots \wedge p_n(\bm{x}) \triangleright 0\}$, 
where $p_i(\bm{x}) \in \mathbb{R}[\bm{x}]$, $\triangleright \in \{\geq, >\}$. 
The set 
\begin{equation*}
    \bm{M}(p_1, p_2, \ldots, p_n) := 
        \{ \sigma_0 + \Sigma_{i=1}^{n} \sigma_i p_i \mid \sigma_0, \sigma_i \in \Sigma[\bm{x}] \}
\end{equation*} 
is called the quadratic module generated by $p_1, p_2, \ldots, p_n$. 
The quadratic module $\bm{M}$ is called \emph{Archimedean}, or satisfies \emph{Archimedean condition}, 
if $C - \norm*{\bm{x}}^2 \in \bm{M}$ for some real number $C > 0$. 
Note that if a bound $C$ of $\mathcal{A}$ such that 
$\forall \bm{x} \in \mathcal{A} \ldotp \norm*{\bm{x}}^2 \leq C$ is known, 
we can make the corresponding quadratic module Archimedean 
by simply adding the (redundant) constraint $C - \norm*{\bm{x}}^2 \geq 0$ to the definition of $\mathcal{A}$. 
A semialgebraic set is of the form $\bigcup_{i=1}^{n} \mathcal{S}_i$, 
where $\mathcal{S}_i$ are basic semialgebraic sets.

In order to better illustrate our main ideas, 
in this paper, we focus on the kind of programs presented in Code~\ref{code:model}. 
The \texttt{*} in loop guard means an indeterministic choice. 
Furthermore, the control guards $\bm{c}_i$ are are assumed to be non-intersecting 
(i.e. there is no state $\bm{x}$ satisfies 
both $\bm{c}_i(\bm{x}) \leq 0$ and $\bm{c}_j(\bm{x}) \leq 0$ for any $i \neq j$).
\footnote{
As readers may find out in the rest of this paper, 
the approach presented here can be extended 
to more complex programs (e.g. with non-trivial loop guards, overlapped control flow) 
without essential difficulties. 
}
The safety property needed to be proved is that 
for any state satisfying the precondition $\mathit{Pre}(\bm{x}) \leq 0$, 
if the loop terminates, the exit state must satisfy the postcondition $\mathit{Post}(\bm{x}) \leq 0$.

\begin{listing}[!ht]
\begin{minted}[mathescape, escapeinside=||]{c}
    // Precondition: {$\mathit{Pre}(\bm{x}) \leq 0$} 
    while (*) {
        if (|$\bm{c}_1(\bm{x}) <= 0$|) { |$\bm{x} = \bm{f}_1(\bm{x})$|; continue; }
        if (|$\bm{c}_2(\bm{x}) <= 0$|) { |$\bm{x} = \bm{f}_2(\bm{x})$|; continue; }
        |$\cdots$|
        if (|$\bm{c}_k(\bm{x}) <= 0$|) { |$\bm{x} = \bm{f}_k(\bm{x})$|; continue; }
        break; 
    }
    // Postcondition: {$\mathit{Post}(\bm{x}) \leq 0$}
\end{minted}
\caption{The Program Model}
\label{code:model}
\end{listing}

We additionally assume during the entire run of the program, 
the program state $\bm{x}$ stays in some known compact set $C_{\bm{x}}$. 
For many real-world programs, there is a natural bound that can be used. 
Furthermore, in most practical programming languages (such as \texttt{C}), 
variable values are in fact bounded in a known range.

\subsection{Invariants and Invariant Clusters}
Now we give the formal definitions regarding invariants and invariant clusters. 
\begin{definition}[Invariaint]
\label{def:inv}
    $\mathit{Inv} \subseteq \mathbb{R}^n$ is an invariant of the program in Code~\ref{code:model}
    if it satisfies the following conditions:
    \begin{enumerate}
        \item $(\mathit{Pre}(\bm{x}) \leq 0) \implies (\bm{x} \in \mathit{Inv})$;
        \item $(\bm{c}_i(\bm{x}) \leq 0 \land \bm{x} \in \mathit{Inv}) \implies 
            (\bm{f}_i(\bm{x}) \in  \mathit{Inv})$, \quad $i=1, \dots, k$;
        \item $(\bigwedge_{i=1}^{k} (\bm{c}_i(\bm{x}) \geq  0) \land \bm{x} \in \mathit{Inv}) \implies 
            (\mathit{Post}(\bm{x}) \leq 0)$.
    \end{enumerate}
\end{definition}

\begin{remark}
    In some previous works, the last condition of Definition~\ref{def:inv} 
    is given using $c_i(\bm{x}) > 0$ instead of $c_i(\bm{x}) \geq 0$. 
    However, since numeric methods are used in our approach, 
    it would be unnecessary and unrealistic to distinguish ``strictly greater than'' 
    such as $c_i(\bm{x}) > 0$ from ``greater than'' such as $c_i(\bm{x}) \geq 0$. 
    Therefore, in order to treat these three conditions uniformly, 
    we relaxed it to $c_i(\bm{x}) \geq 0$ as in Definition~\ref{def:inv}. 
    Note that if the third condition holds for $c_i(\bm{x}) \geq 0$, 
    it will also holds for $c_i(\bm{x}) > 0$.
\end{remark}

Clearly, the existence of an invariant implies the safety property to be proved.

Next, we give the definition of polynomial templates used in this paper:  
\begin{definition}[Polynomial Template]
\label{def:template}
    A polynomial template $I(\bm{a}, \bm{x}):C_{\bm{a}} \times \mathbb{R}^n \mapsto \mathbb{R}$ 
    is a polynomial in $\mathbb{R}[\bm{a},\bm{x}]$, 
where $C_{\bm{a}}$ is a known compact subset of $\mathbb{R}^m$. 
    $\bm{a} = (a_1, a_2, \dots, a_m) \in C_{\bm{a}}$ in $I(\bm{a}, \bm{x})$ 
    are referred as parameters of $I(\bm{a}, \bm{x})$.  
\end{definition}

\begin{remark}
    For simplicity, we firstly consider polynomial templates. 
The detailed discussion of basic semialgebraic 
    (and general semialgebraic) templates is given in Section~\ref{sec:extension}. 
\end{remark}

\begin{remark}
    In Definition~\ref{def:template}, 
    the parameter $\bm{a}$ is assumed to be taken from a compact set $C_{\bm{a}}$.
    This is without loss of generality 
    if the template polynomial $I(\bm{a}, \bm{x})$ is homogeneous in $\bm{a}$ 
    (e.g. when the template is taken as $\sum_{\alpha} a_{\alpha} \bm{x}^{\alpha}$ 
    which includes all monomials $\bm{x}^{\alpha}$ under a certain degree $d$). 
    In this case, the parameter $\bm{a}$ can be scaled by any positive constant 
    without changing the invariant candidate it defines. 
    As a result, we may just take $C_{\bm{a}}$ to be $[-1, 1]^m$. 
\end{remark}

An invariant cluster is defined as a subset of invariants 
that can be described using a template and a set of valid parameters: 
\begin{definition}[Invariant Clusters]
    \label{def:clusters}
    An invariant cluster $C$ of a program $P$ w.r.t. a polynomial template $I(\bm{a}, \bm{x})$ 
    is a set of invariants of $P$ given by $\{ I(\bm{a}, \bm{x}) \leq 0 \mid \bm{a} \in R \}$
    for a specified set of parameters $R$. 
\end{definition}

Obviously, the elements of $R$ should produce valid invariant 
after being substituted back to the template. 
The related concepts are formalized below: 
\begin{definition}[Valid Set]
\label{def:valid}
    Given a program $P$ and a template $I(\bm{a},\bm{x})\in \mathbb{R}[\bm{a},\bm{x}]$, 
    a parameter assignment $\bm{a}_0 \in C_{\bm{a}}$ is valid if 
    its instantiation $\{ \bm{x} \mid I(\bm{a}_0,\bm{x}) \leq 0 \}$ 
    is an invariant of the program $P$. 
    The valid set, denoted as $R_I$, is the set of all valid parameter assignments 
    for the polynomial template $I(\bm{a},\bm{x})$. 
\end{definition}

Clearly, the set $R$ describing an invariant cluster 
must be a subset of the valid set $R_I$. 
As simpler and larger $R$ are preferred in most cases, 
the problem of synthesizing invariant clusters renders to 
searching for simpler and tighter \emph{underapproximations} of the valid set $R_I$. 

\subsection{Sum-of-squares Relaxations}
In this subsection, we give a brief introduction to sum-of-squares relaxations 
used in polynomial optimization problems of the following form \eqref{eqn:pop}. 

\begin{mini}|s|
    {\mathclap{\substack{\bm{u}=(u_1,\ldots,u_r)}}}{\bm{c}^T\bm{u}}
    {\label{eqn:pop}}
    {}
    \addConstraint{a_{10}(\bm{x}) + a_{11}(\bm{x})u_1 + \dots + a_{1r}(\bm{x})u_r}{\geq 0}{\quad \forall \bm{x} \in K_1,}
    \addConstraint{\cdots \quad}{\cdots}{,}
    \addConstraint{a_{s0}(\bm{x}) + a_{s1}(\bm{x})u_1 + \dots + a_{sr}(\bm{x})u_r}{\geq 0}{\quad \forall \bm{x} \in K_s,}
\end{mini}
where
\begin{equation*}
    K_i= \{ \bm{x} \in \mathbb{R}^{n} \mid g_{i1}(\bm{x}) \leq 0, \dots, g_{im_i}(\bm{x}) \leq 0 \}, i=1,\ldots,s,
\end{equation*}
and $a_{ij} \in \mathbb{R}[\bm{x}], i=1,\ldots,s;j=1,\ldots,r$ are known polynomials.

The optimization \eqref{eqn:pop} is a polynomial optimization 
with a linear objective function over decision variables $u_i$ 
and some non-negative constraints on certain polynomials. 
Such a constraint demands that 
when $u_i$ are used to linearly combined some known polynomials, 
the resulted polynomials are non-negative on some known basic semialgebraic set $K_i$.

By exploiting the relation between non-negative polynomials and sum-of-squares polynomials, 
some efficient methods have been proposed to solve this type of optimization problems. 
In particular, based on Putinar's Positivstellensatz, 
Lasserre \cite{lasserre2000global} showed that 
we can use the following hierarchy of sum-of-squares relaxations 
(every choice of natural number $d$ corresponds to a SOS programming problem): 

\begin{mini}|s|
    {\mathclap{\substack{\bm{u}=(u_1,\ldots,u_r)}}}{\bm{c}^T\bm{u}}
    {\label{eqn:sosrelaxe}}
    {}
    \addConstraint{a_{10}(\bm{x}) + \sum_{j=1}^{r}a_{1j}(\bm{x})u_j}{= \sigma_{10}(\bm{x}) + \sum_{j=1}^{m_1} \sigma_{1j}(\bm{x})g_{1j}(\bm{x}),}
    \addConstraint{\cdots \quad}{\cdots,}
    \addConstraint{a_{s0}(\bm{x}) + \sum_{j=1}^{r}a_{sj}(\bm{x})u_j}{= \sigma_{s0}(\bm{x}) + \sum_{j=1}^{m_1} \sigma_{sj}(\bm{x})g_{sj}(\bm{x}),}
\end{mini}
where 
\begin{equation*}
    \sigma_{ij} \in \Sigma^{2d}[\bm{x}],i=1,\dots,s; j=1, \dots, m_i\
\end{equation*}
to approximate the optimal solution when a constraint 
of the form $\|\bm{x}\|^2 - M_i \leq 0$ is included in the definition of $K_i$, $i=1, \ldots, s$. 
The sum-of-squares relaxation \eqref{eqn:sosrelaxe} can then be reduced 
to a semidefinite programming problem and be solved efficiently 
(e.g. by interior-point methods) in polynomial time, 
given a desired numeric error bound.

 \section{Underapproximation of the Valid Set}
\label{sec:underapproximation}
In this section, we briefly introduce 
how to obtain underapproximations of the valid set $R_I$. 

\subsection{Representing $R_I$ as Intersections}
The first step of our approach is to 
give a formal and exact description of the valid set $R_I$. 
More specifically, the conditions in Definition~\ref{def:inv} 
are translated one-by-one to some sets $R_I(i)$, 
where each $R_I(i)$ corresponds to the $i$-th condition in Definition~\ref{def:inv}. 

Combining Definition~\ref{def:inv} and Definition~\ref{def:valid}, 
it is easily obtained that 
\begin{equation}
    \label{eqn:defRi}
    \begin{aligned}
        &R_I(0) = \{ \bm{a} \in C_{\bm{a}} \mid \forall \bm{x} \in C_{\bm{x}} \ldotp 
        (\mathit{Pre}(\bm{x}) \leq 0) {\implies} (I(\bm{a}, \bm{x}) \leq 0) \} \, , \\
        &R_I(i) = \{ \bm{a} \in C_{\bm{a}} \mid \forall \bm{x} \in C_{\bm{x}} \ldotp 
((c_i(\bm{x}) \leq 0) \land (I(\bm{a}, \bm{x}) \leq 0))  
                \implies I(\bm{a}, \bm{f}_i(\bm{x})) \leq 0) \} \text{ for } i=1, \dots, k  \, ,  \\
        &R_I(k+1) = \{ \bm{a} \in C_{\bm{a}} \mid \forall \bm{x} \in C_{\bm{x}} \ldotp
(\wedge_{i=1}^{k} (c_i(\bm{x}) \geq 0) \land  
            (I(\bm{a}, \bm{x}) \leq 0)) \implies \mathit{Post}(\bm{x}) \leq 0 \} \, .
    \end{aligned}
\end{equation}
and $R_I  = \bigcap_{i=0}^{k+1} R_I(i)$. 

Note that underapproximations of $R_I$ can be obtained 
by underapproximating each $R_I(i)$ and take the intersection. 
In fact, it is easy to show that 
\begin{proposition}
    Let $S = \bigcap_{i=1}^{n} S(i)$ and $S_u(i) \subseteq S(i)$, 
    then $S_u = \bigcap_{i=1}^{n} S_u(i)$ satisfies $S_u \subseteq S$ 
    and $S \setminus S_u \subseteq \bigcup_{i=1}^n (S(i) \setminus S_u(i))$. 
\end{proposition}
\begin{proof}
    Simply apply the De Morgan's laws. 
\end{proof}

\subsection{Underapproximating $R_I(i)$ using SDP}
The representation of $R_I(i)$ in \eqref{eqn:defRi} involves quantifiers 
and is therefore harder to reason about. 
For example, it is non-trivial even to decide 
whether $\bm{a}_0 \in R_I(i)$ for a given $\bm{a}_0$. 
However, it is easy to find that 
they have a similar structure: 
\begin{equation*}
    R_I(i) = \{ \bm{a} \in C_{\bm{a}} \mid 
    \forall \bm{x} \in \mathbf{K}_i(\bm{a}) \ldotp 
    l_i(\bm{a}, \bm{x}) \leq 0\}
\end{equation*}
where $l_i(\bm{a}, \bm{x})$ is a polynomial and  
$\mathbf{K}_i(\bm{a}) = \{ \bm{x} \in C_{\bm{x}} \mid \bigwedge_{r=1}^{t_i} g_{ir}(\bm{a}, \bm{x}) \leq 0 \}$
for some polynomials $g_{ir}(\bm{a}, \bm{x})$. 
\footnote{
    One may observe that $\mathbf{K}_0(\bm{a})$ and $l_{k+1}(\bm{a}, \bm{x})$ 
    can be defined without using $\bm{a}$ variables and relative operations can be simplified. 
    Nevertheless, we choose to stick to the above structure for unification. 
}

This structure enables us to use techniques presented in \cite{Lasserre15} 
to underapproximate $R_I(i)$. 
We briefly recap its ideas here for self-containment.  

First, notice that 
\begin{equation*} 
    R_I(i) = \{ \bm{a} \in C_{\bm{a}} \mid \bar{J}_i(\bm{a}) \leq 0 \}
\end{equation*}
where $\bar{J}_i(\bm{a}) = \sup_{\bm{x} \in \mathbf{K}_i(\bm{a})} l_i(\bm{a}, \bm{x})$. 
We additionally define $J_i(\bm{a}) = \max (\bar{J}_i(\bm{a}), -M)$ for some real number $M > 0$ 
(we choose $M = 10.0$ in experiments). 
It is obvious that $J_i(\bm{a}) \leq 0$ and $\bar{J}_i(\bm{a}) \leq 0$ define the same set.

\begin{remark}
    Directly using the function $\bar{J}_i(\bm{a})$ will involve difficulties 
    when $\mathbf{K}_i(\bm{a})$ is empty (in which case, $\bar{J}_i(\bm{a})$ becomes $-\infty$). 
    Therefore, \cite{Lasserre15} assume $\mathbf{K}_i(\bm{a})$ 
    to be non-empty in all the following results. 
    However, we observe that by using $J_i(\bm{a})$ instead of $\bar{J}_i(\bm{a})$, 
    the condition $\mathbf{K}_i(\bm{a}) \neq \emptyset$ can be dropped. 
\end{remark}

Underapproximating the set $R_I(i)$ can therefore be done by 
approximating $J_i(\bm{a})$ from above. 
The function $J_i(\bm{a})$ can be shown 
to be upper-semicontinuous in $C_{\bm{a}}$ 
(i.e. for all $\bm{a}_0 \in C_{\bm{a}}$,  
$\displaystyle \limsup_{\bm{a} \to \bm{a}_0} J_i(\bm{a}) \leq J_i(\bm{a}_0)$
holds)
similarly to \cite[Lemma~1]{Lasserre15}, 
and therefore admits effective approximations using polynomials. 
In particular, a series of polynomial approximations can be obtained 
by solving the following hierarchy of SOS programs:

\begin{mini}|s|
    {\mathclap{
            \substack{
                \mathrm{p}_{\alpha}, \sigma_{i}, \sigma^{K}_{ir}, \sigma^{\bm{a}}_{ir}, \\ \sigma^{\bm{x}}_{ir}, \sigma'_{i}, \sigma'^{K}_{ir}, \sigma'^{\bm{a}}_{ir}, \sigma'^{\bm{x}}_{ir}, }
        }
    }{\sum_{\alpha} \gamma_{\alpha}\mathrm{p}_{\alpha}^i}
    {\label{eqn:sosrelax}}
    {\bm{M}_d(i):}
    \addConstraint{p_i(\bm{a}) {-} l_i(\bm{a}, \bm{x}) {=} \sigma_{i}(\bm{a}, \bm{x}) {-} \sum_{r=1}^{m_i} \sigma^K_{ir}(\bm{a}, \bm{x})g_{ir}(\bm{a}, \bm{x}) } {} 
    \addConstraint{\phantom{1} - \sum_{r=1}^{s_{\bm{a}}}\sigma^{\bm{a}}_{ir}(\bm{a}, \bm{x}) h^{\bm{a}}_{r}(\bm{a}) - \sum_{r=1}^{s_{\bm{x}}} \sigma^{\bm{x}}_{ir}(\bm{a}, \bm{x}) h^{\bm{x}}_r(\bm{x}),} \addConstraint{p_i(\bm{a}) - M {=} \sigma'_{i}(\bm{a}, \bm{x}) {-} \sum_{r=1}^{m_i} \sigma'^K_{ir}(\bm{a}, \bm{x})g_{ir}(\bm{a}, \bm{x}) } {} 
    \addConstraint{\phantom{1} - \sum_{r=1}^{s_{\bm{a}}}\sigma'^{\bm{a}}_{ir}(\bm{a}, \bm{x}) h^{\bm{a}}_{r}(\bm{a}) - \sum_{r=1}^{s_{\bm{x}}} \sigma'^{\bm{x}}_{ir}(\bm{a}, \bm{x}) h^{\bm{x}}_r(\bm{x}),} \end{mini}
where 
$\mathrm{p}_{\alpha}^i \in \mathbb{R}$, 
$\sigma_{i}, \sigma^K_{ir}, \sigma^{\bm{a}}_{ir}, \sigma^{\bm{x}}_{ir}, \sigma'_{i}, \sigma'^K_{ir}, \sigma'^{\bm{a}}_{ir}, \sigma'^{\bm{x}}_{ir} \in \Sigma[\bm{a}, \bm{x}]^{2d}$. 
$p_i(\bm{a}) = \sum_{\alpha} \mathrm{p}_{\alpha}^i \bm{a}^{\alpha}$ 
is a polynomial that includes all monomials up to degree $2d$. 
$\gamma_{\alpha}$ are rescaled moments defined as
\begin{equation*}
    \gamma_{\alpha} =\frac{1}{\mu(C_{\bm{a}})}\int_{C_{\bm{a}}} \bm{a}^{\alpha} \mathrm{d} \mu(\bm{a}) \, , 
\end{equation*}
where $\mu(\cdot)$ denotes the Lebesgue measure. 
$h_r^{\bm{a}}(\bm{a})$, $h_r^{\bm{x}}(\bm{x})$ are chosen such that
\begin{equation*}
\begin{aligned}
    &C_{\bm{a}} = \{ \bm{a} \mid \bigwedge_{r=1}^{s_{\bm{a}}} h^{\bm{a}}_r(\bm{a}) \leq 0 \} \, ,  \\[-1mm]
    &C_{\bm{x}}= \{ \bm{x} \mid \bigwedge_{r=1}^{s_{\bm{x}}} h^{\bm{x}}_r(\bm{x}) \leq 0 \} \, ,  \\[-1mm]
\end{aligned}
\end{equation*}
and the corresponding quadratic modules 
of $\mathbf{K}_i(\bm{a})$, $C_{\bm{a}}$ and $C_{\bm{x}}$ are assumed to be Archimedean. 
\footnote{ 
    This can be done by simply adding a redundant ball constraint in their definition. 
}

Given a set of feasible assignments of $\mathrm{p}_{\alpha}^i \in \mathbb{R}$ 
of the degree $d$ relaxation $\bm{M}_d(i)$, 
the polynomial approximation of $J_i(\bm{a})$ can be obtained as 
$p^{(d)}_i(\bm{a}) = \sum_{\alpha} \mathrm{p}^i_{\alpha} \bm{a}^{\alpha}$ 
and the corresponding underapproximation of $R_I(i)$ 
is $R^{(d)}(i) = \{ \bm{a} \in C_{\bm{a}} \mid p^{(d)}_i(\bm{a}) \leq 0 \}$. 

Much like what has been done in \cite{Lasserre15}, 
the underapproximations $R^{(d)}(i)$ can be proved to have many desired properties, including: 

\begin{theorem}[Soundness]
\label{thm:soundness}
    Given a feasible solution of \eqref{eqn:sosrelax} and $R^{(d)}(i)$ obtained as above,  
    $R^{(d)} = \bigcap_{i=0}^{k+1} R^{(d)}(i)$ is an underapproximation 
    of the valid set $R_I$, i.e. $R^{(d)} \subseteq R_I$. 
\end{theorem}

\begin{proof}
    Very similar to \cite[Theorem~3]{Lasserre15}.
\end{proof}

\begin{theorem}[Convergence]
    \label{thm:convergence}
    If for every degree $d$ the program $\mathbf{M}_d(i)$ is solvable, 
    assume $R^{(d)}$ is built as above using the optimal solution, 
    then 
    \begin{equation}
    \label{eqn:convergenceRd}
        \lim_{d \to \infty} \mu(R_I(i) \setminus R^{(d)}(i)) = 0, \forall i\in {0,1,\ldots, k+1}.
    \end{equation}
    provided that the set 
    $R_I^r(i) = \{ \bm{a} \in C_{\bm{a}} \mid J_i(\bm{a}) = 0 \}$
    has Lebesgue measure zero. 
\end{theorem}

\begin{proof}
    Very similar to \cite[Theorem~5]{Lasserre15}.
\end{proof}

\begin{remark}
    In \cite[Theorem~5]{Lasserre15}, it has been shown that 
    if the quadratic module corresponding to the constraints of \eqref{eqn:sosrelax} 
    satisfies Archimedean condition and the feasible region contains an interior point, 
    then the SOS problem \eqref{eqn:sosrelax} is solvable. 
Moreover, as shown in \cite[Theorem~1]{Josz2016}, 
    we can also avoid to check the existence of an interior point 
    by adding the (redundant) constraints 
    $\|\bm{x}\|^2 \leq M_{\bm{x}}, \|\bm{a}\|^2 \leq M_{\bm{a}}$ to \eqref{eqn:sosrelax}, 
    which will also guarantee that the SOS problem \eqref{eqn:sosrelax} is solvable. 
\end{remark}

\begin{remark}
    The assumption that $R_I^r(i)$ has Lebesgue measure zero 
    basically states that 
    the set of zero points of $J_i$ should be negligible. 
    Note that the zero points of $J_i$ are exactly zero points of $\bar{J}_i$. 
    We show in the following Lemma~\ref{thm:semiAlge} that 
    $\bar{J}_i$ is a semialgebraic function. 

    \begin{lemma}[Semialgebraic functions]
    \label{thm:semiAlge}
        Let $A \subseteq B\subseteq \mathbb{R}^{m+n}$ be two semialgebraic sets. 
        Let $\psi: B \to \mathbb{R}$ be a polynomial function. 
        Then $\theta(\bm{a})=\sup_{(\bm{a}, \bm{x})\in A} \psi(\bm{a},\bm{x})$ is a semialgebraic function. 
        In particular, $\bar{J}_i$ are semialgebraic functions.
    \end{lemma}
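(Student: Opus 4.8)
The plan is to reduce the claim to the Tarski--Seidenberg theorem (equivalently, quantifier elimination for the first-order theory of real closed fields), using the standard fact that a function is semialgebraic precisely when its graph is a semialgebraic set. So it suffices to exhibit the graph of $\theta$ as the solution set of a first-order formula built from the given polynomial data, and then to invoke closure of semialgebraic sets under projection. This is exactly the kind of argument that \cite{bochnak1998real} is designed to support, so I would cite it for the ambient machinery.

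First I would recall that, since $A$ is semialgebraic, the membership predicate $(\bm{a}, \bm{x}) \in A$ is expressed by a finite Boolean combination of polynomial equalities and inequalities, and that the natural domain of $\theta$ is the projection $\pi(A) = \{\bm{a} \mid \exists \bm{x}, (\bm{a},\bm{x}) \in A\}$, which is itself semialgebraic. On this domain I would describe the graph of $\theta$ as the set of pairs $(\bm{a}, t)$ satisfying both that $t$ is an upper bound, $\forall \bm{x}\,[(\bm{a},\bm{x}) \in A \Rightarrow \psi(\bm{a},\bm{x}) \leq t]$, and that $t$ is the \emph{least} upper bound, $\forall s\,[s < t \Rightarrow \exists \bm{x}\,((\bm{a},\bm{x}) \in A \wedge \psi(\bm{a},\bm{x}) > s)]$. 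Since $\psi$ is a polynomial and the condition $(\bm{a},\bm{x}) \in A$ is semialgebraic, this is a first-order formula over the reals whose only quantifiers range over the real tuple $\bm{x}$ and the scalar $s$. By Tarski--Seidenberg these quantifiers can be eliminated, so the defining set of $(\bm{a},t)$ is semialgebraic; being the graph of $\theta$, this makes $\theta$ a semialgebraic function. The ``in particular'' clause then follows by taking $B = C_{\bm{a}} \times C_{\bm{x}}$, $A = \bm{K}_i$ (which is semialgebraic, indeed basic semialgebraic, by its definition via \eqref{eqn:defKi}) and $\psi = l_i$, so that each $\phi''_i$ coincides with $\theta$ on the locus where $K_i(\bm{a}) \neq \emptyset$.

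The main obstacle I anticipate is handling the boundary behaviour cleanly rather than the core quantifier-elimination step. One must be careful that the supremum is genuinely finite where claimed: on the fibres where $K_i(\bm{a})$ is empty one has $\phi''_i = -\infty$, so the graph in the strict sense is semialgebraic only after restricting to the semialgebraic locus $\pi(A)$. In the general statement one should therefore either assume $\psi$ is bounded above on each nonempty fibre—as holds in our application, since the sets $K_i(\bm{a})$ sit inside the compact $C_{\bm{x}}$ and $l_i$ is continuous, so the sup is attained and finite—or phrase the conclusion for the extended-real-valued function and note that the exceptional locus is itself semialgebraic. Once this locus is isolated, the defining formula is manifestly first-order, and the essential content of the lemma is precisely the applicability of Tarski--Seidenberg to the sup-defining formula.
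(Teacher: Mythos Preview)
Your proposal is correct and follows essentially the same approach as the paper: both arguments exhibit the graph of $\theta$ via the first-order characterization of a supremum (upper bound plus approximation from below) and then invoke Tarski--Seidenberg, with the same specialization to $A=\bm{K}_i$, $\psi=l_i$ for the $\phi''_i$ claim. If anything, your treatment of the empty-fibre/$-\infty$ locus is more careful than the paper's, which glosses over that point.
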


    \begin{proof}
        By definition of semialgebraic functions, 
        we only need to show the graph of $\theta(\bm{a})$ is a semialgebraic set 
        (see e.g. \cite[Definition~2.2.5]{bochnak1998real}). 
        The graph of the function $\theta(\bm{a})$ is
        \begin{equation*}
        \begin{aligned}
            \{
            (\bm{a},y)\in \mathbb{R}^{m+1} \mid 
            &(\forall \bm{x} \ldotp (\bm{a}, \bm{x}) \in A \implies \psi(\bm{a},\bm{x})\leq y) \land \\
            &(\forall \epsilon\in \mathbb{R}^+, \exists \bm{x}. (\bm{a}, \bm{x}) \in A \land
	    \psi(\bm{a},\bm{x})+\epsilon > y)
	    \},
        \end{aligned}
        \end{equation*}
        which is a semialgebraic set in $\mathbb{R}^{m+1}$ by Tarski-Seidenberg principle
        (see e.g. \cite[Definition~2.2.3, Proposition~2.2.4]{bochnak1998real}).

        Applying this result to $\phi''_i$ with 
        $A = \{ (\bm{a}, \bm{x}) \in C_{\bm{a}} \times C_{\bm{x}} \mid \bm{x} \in K_i(\bm{a}) \}$, 
        $B = \mathbb{R}^{m+n}$ and $\psi = l_i$, 
        it follows that $\phi''_i$ are semialgebraic functions.
    \end{proof}
  
    Now, as $\bar{J}_i$ is semialgebraic, 
    by \cite[Lemma~2.5.2]{bochnak1998real}, 
    there exists a nonzero polynomial $ h \in \mathbb{R}[\bm{a}, y]$ such that 
    $h(\bm{a}, \bar{J}_i(\bm{a})) = 0 $ for every $\bm{a}\in C_{\bm{a}}$. 
    The set of zero points of $\bar{J}_i(\bm{a})$ 
    is therefore contained in the zero level set of a polynomial $h(\bm{a}, 0)$. 

    As $h(\bm{a}, 0)$ is a polynomial, 
    if the assumption of $R_I^r(i)$ is zero-measured were to be violated, 
    $h(\bm{a}, 0)$ must be constant zero. 
    In other words, $h(\bm{a}, y)$ contains $y$ as a factor, 
    which is relatively rare in practice. 
\end{remark}

Theorem~\ref{thm:convergence} indicates that 
if the set $R_I^r(i)$ is negligible, 
then \emph{almost} all valid assignments can be included 
when the relaxation degree $d$ is high enough. 
Therefore, we obtain the following completeness result: 
\begin{theorem}[Weak Completeness]
\label{thm:weakComplete}
    if the set $R_I^r(i)$ has Lebesgue measure zero, 
    the valid set $R_I$ contains an interior point 
    and the SOS programs $\bm{M}_d(i)$ are solvable, 
    then the above procedure can always find 
    a non-empty underapproximation $R^{(d)}$ of the valid set $R_I$ 
    (and subsequently, find an invariant). 
\end{theorem}

\begin{proof}

    Let ${R^{(d)}=\bigcap_{i=0}^{k+1}R^{(d)}(i)}$. 
    Since $R_I = \bigcap_{i=0}^{k+1} R_I(i)$ has positive Lebesgue measure, 
    \eqref{eqn:convergenceRd} indicates that 
    \begin{equation*}
        \lim_{d \to \infty} \mu(R_I \setminus R^{(d)}) = 0,
    \end{equation*}
    which further implies that $R^{(d)}$ 
    has positive Lebesgue measure when $d$ is large enough. 
    This $R^{(d)}$ is therefore a non-empty and 
    is an underapproximation of the valid set $R_I$ by Theorem~\ref{thm:soundness}.
\end{proof}

 \section{Additional Techniques} 
\label{sec:techniques}
The methods presented in Section~\ref{sec:underapproximation} 
has been shown to have some desirable theoretical properties 
(cf. Theorem~\ref{thm:soundness}, Theorem~\ref{thm:convergence} and Theorem~\ref{thm:weakComplete}). 
However, it still faces some challenges when directly applied in practice. 
One key problem is that the solving of SOS program $\mathbf{M}_d(i)$ 
becomes difficult when the relaxation degree $d$ is relatively high. 
In this section, we present some additional techniques 
that can be used to ease the problem and improve the efficiency. 
In addition, we also discuss the potential unsoundness induced by numerical errors 
and possible ways to ease its influence.

\subsection{Adaptive Partitioning} 
In Section~\ref{sec:underapproximation}, 
one of the first steps when underapproximating the valid set 
is to fix a compact set $C_{\bm{a}}$ 
containing possible choices of parameters. 
The underapproximation procedure then amounts to 
approximating some (upper-semicontinuous) functions $J_i(\bm{a})$ 
from above using degree $2d$ polynomials $p^{(d)}_i$ in the set $C_{\bm{a}}$. 
The obvious way to improve the precision of the approximations 
is to increase the relaxation degree $d$. 
However, one may also improve the precision by using a smaller $C_{\bm{a}}$. 
This subsection explores this idea 
and proposes an adaptive partitioning scheme that partitions $C_{\bm{a}}$ 
to improve the approximation precision with polynomials. 

Given the parameter set $C_{\bm{a}}$ 
and a partition $(C'_{\bm{a}}, C''_{\bm{a}}) = \mathit{Bisec}(C_{\bm{a}})$, 
it can be observed that the valid set $R_I$ w.r.t. $C_{\bm{a}}$ 
is the union of the valid set $R'_I$ and $R''_I$
w.r.t. $C'_{\bm{a}}$ and $C''_{\bm{a}}$ respectively, 
and underapproximations of $R_I$ can be obtained by 
taking the union of underapproximations of $R'_I$ and $R''_I$. 
One may therefore design a simple recursive procedure accordingly. 

The problem remains here is to decide when to stop the partitioning. 
One common standard is to stop partitioning when the current $C_{\bm{a}}$ is small enough. 
On top of that, we give an additional, adaptive standard in the following 
by looking closely into the underapproximation procedure. 

First, notice that the objective value of $\mathbf{M}_d(i)$ (denoted as $v$ here)
is actually the rescaled integral of the polynomial $p^{(d)}_i(\bm{a})$ 
on the current $C_{\bm{a}}$ (see \cite{Lasserre15} for details). 
As $p^{(d)}_i(\bm{a})$ approximates $J_i(\bm{a})$ from the above, 
it can be seen that $v$ measures
how good the polynomial $p^{(d)}_i(\bm{a})$ approximates $J_i(\bm{a})$ to some extents. 
Furthermore, it is easy to show that 
if $C_{\bm{a}}$ is partitioned as $C'_{\bm{a}}$ and $C''_{\bm{a}}$ 
and the respective objective values $v'$ and $v''$ are computed by solving \ref{eqn:sosrelax}, 
there will always be $\Delta v = v - \frac{1}{2} (v' + v'') \geq 0$. 
This difference $\Delta v$ can subsequently be used 
to measure the improvements induced by this partitioning step. 
A large $\Delta v$ indicates that probably more partitioning is needed, 
whereas a small $\Delta v$ indicate that the room for improvements 
by further partitioning is probably limited, and the partitioning process could be stop. 

In summary, the adaptive partitioning scheme can be described as follows:
\begin{steps}
    \item If the diameter of current $C_{\bm{a}}$ 
          is less than a given threshold $\epsilon_d > 0$, stop. 
          Otherwise, for current $C_{\bm{a}}$, 
          build underapproximations using 
          methods in Section~\ref{sec:underapproximation} 
          and record the optimal value $v$. 
    \item Partition the current $C_{\bm{a}}$ 
          as the union of $C'_{\bm{a}}$ and $C''_{\bm{a}}$, 
          build underapproximations also for $C'_{\bm{a}}$ and $C''_{\bm{a}}$, 
          and record the respective optimal value $v'$ and $v''$. 
          Compute $\Delta v$ using $\Delta v = v - \frac{1}{2} (v' + v'')$. 
    \item If $\Delta v$ is less than a given threshold $\epsilon_v > 0$, stop. 
          Otherwise, recursively apply this procedure on $C'_{\bm{a}}$ and $C''_{\bm{a}}$ 
          and take the union of the respective results as the final underapproximation result. 
\end{steps}

One may also notice that the partitioning 
of the parameter set $C_{\bm{a}}$ here echoes with 
the verification approaches based on interval analysis \cite{djaballah2017construction}. 
Interval analysis can be used in parallel 
with the adaptive partitioning scheme described above, 
which will help in further pruning of partitions from early on. 

\subsection{Alternative Polynomial Basis}
In order to specify the unknown polynomial 
$p_i(\bm{a}) \in \mathbb{R}^{2d}[\bm{a}]$ in \eqref{eqn:sosrelax}, 
real-valued decision variables $\mathrm{p}_{\alpha}^i$ are assigned 
to denote the coefficients for every monomials under degree $2d$, 
and we have $p_i(\bm{a}) = \sum_{\alpha} \mathrm{p}_{\alpha}^i \bm{a}^{\alpha}$. 
However, this is not the only way to specify an unknown polynomial using parameters. 
Such representation essentially connects to the concept of polynomial basis. 

In short, a polynomial basis of a polynomial vector space 
is a set of linearly independent polynomials that spans the space. 
Let $\mathcal{P}$ be the set of polynomials under degree $d$, 
the most common polynomial basis is the monomial basis 
$\{1, \bm{a}, \dots, \bm{a}^{\alpha}, \dots \}$ 
which include all monomials under the degree $d$. 

For the above polynomial vector space $\mathcal{P}$, 
besides the monomial basis, 
there are other non-trivial basis such as Bernstein basis and Chebyshev basis. 
While using a different basis may not change the number of decision variables in \eqref{eqn:sosrelax} 
(since they are just an alternative representation of the same program), 
it may be more stable numerically in practical solving 
(see \cite[Section 3.1.5]{blekherman2012semidefinite}). 

For example, to use Chebyshev basis to reform \eqref{eqn:sosrelax}, 
one can change the definition of $p_i(\bm{a})$
to $p_i(\bm{a}) = \sum_{\alpha} \mathrm{p}_{\alpha}^i t_d^{\alpha}(\bm{a})$ 
where $\{ t_d^{\alpha} \}$ are Chebyshev polynomials of degree $2d$.

\subsection{Numerical Errors in SDP solving} 
In Section~\ref{sec:underapproximation}, 
the underapproximations of the valid set are obtained 
by solving the relaxed SOS programs $\bm{M}_d$, 
which will be ultimately solved as some SDP programs. 
As typical SDP solvers are based on numerical computation, 
there will be inevitably potential numerical errors in the process, 
which could cause potential unsoundness of the result. 
In the following, we present some ways that could be used to ease the effects. 

\begin{itemize}
    \item 
        Posterior verifications: 
        One easy way is to always using an exact symbolic method to check 
        the soundness of the solutions returned by numerical solvers \cite{DXZ13}. 
        Compared with directly solving the constraints, 
        checking the soundness of a certain solution 
        is much easier for symbolic solvers, 
        such as Redlog \cite{dolzmann1997} or Z3 \cite{Z3}. 
        This approach is relatively easy to employ, 
        and can be used \emph{after} the numerical solutions are given. 
        However, for some larger problems, 
        even checking the soundness of a solution symbolically can be difficult. 
    \item 
        More precise SDP solving: 
        Alternatively, one may consider to increase the precision in the process of SDP solving, 
        or even consider an exact SDP solving approach as given in \cite{Henrion18}. 
        Nevertheless, although one may increase the precision of SDP solving 
        by using multiple-precision or arbitrary-precision solvers \cite{nakata2010numerical, JoldesMP17}, 
        the possibility of unsoundness caused by numerical errors can not be eliminated completely. 
        On the other hand, an exact SDP solving 
        would probably resort to symbolic methods (such as \cite{Henrion18}), 
        and can only solve SDP instances of small sizes. 
    \item 
        Validated SDP solving: 
        Finally, one could resort to validated SDP solving proposed by \cite{RVS16}. 
        The basic idea therein is to firstly compute an error bound $\epsilon$ 
        of the numerical errors of the results given by the solvers. 
        After that, the constraints of the original problems are replaced 
        by their $\epsilon$-strengthening versions (e.g. $A \succeq 0$ to $A + \epsilon I \succeq 0$). 
        The solutions given by solving the strengthened version 
        can then be safely used as sound solutions. 
        The ideas are extended and detailed in \cite{Gan2020}, where the authors 
        guarantee the soundness of SDP solving when synthesizing non-linear Craig interpolants. 
        This approach can guarantee the soundness with minimal performance loss, 
        but the strengthening of constraints (i.e. shrinking of the feasible set) 
        means that the completeness results are lost. 
\end{itemize}

In this paper, as the examples used in experiments are relatively small, 
we apply the symbolic posterior verification methods 
to check the soundness of the results given by numerical solvers. 

 \section{Extensions}
\label{sec:extension}

\subsection{Semialgebraic Template}
In this subsection, we discuss the extension 
of our approach to invariant synthesis with semialgebraic templates. 
First, we observe that the techniques introduced in this paper can be applied 
to the cases when templates are basic semialgebraic 
(instead of only polynomial) without substantial changes. 
After that, we briefly discuss the possible application 
of our algorithm to the cases when templates are semialgebraic.

The basic semialgebraic template is formally defined as follows: 
\begin{definition}
    A basic semialgebraic template $\mathit{Inv}_b(\bm{a}, \bm{x})$ is a finite collection 
    of polynomials $I_r(\bm{a}, \bm{x}) :C_{\bm{a}} \times \mathbb{R}^n \mapsto \mathbb{R}$ 
    in $\mathbb{R}[\bm{a}, \bm{x}]$, 
    where $C_{\bm{a}} \subseteq \mathbb{R}^m$ is a known compact set. 
Here, $\bm{a} = (a_1, a_2, \dots, a_m) \in C_{\bm{a}}$ in $I(\bm{a}, \bm{x})$ 
    are referred as parameters of $I(\bm{a}, \bm{x})$.  
    Given a parameter assignment $\bm{a}_0 \in \mathbb{R}^{m}$, 
    the instantiation of $\mathit{Inv}_b$ w.r.t. $\bm{a}_0$ is 
    the set ${\{\bm{x} \mid \bigwedge_{r} I_r(\bm{a}_0, \bm{x}) \leq 0} \}$.
\end{definition}

A brief review of techniques presented in previous sections 
indicates that our algorithm can be extended 
to basic semialgebraic case with only minimal modifications. 
In particular, in the basic semialgebraic case, instead of a single polynomial, 
$l_i(\bm{a}, \bm{x})$ should be changed to a maximal of polynomials 
(for example, $l_0$ should be $\max_{r} I_r(\bm{a}, \bm{x})$). 
The derived SOS programs will be much like \eqref{eqn:sosrelax} 
but will contain multiple constraints. 
After that, all other results can be derived similarly.

As for general semialgebraic templates, 
we show that this case can be treated by \emph{lifting} to higher dimensions. 
This shows a theoretical possibility 
to use our algorithm to synthesize semialgebraic invariants. 

We give formal definition of general semialgebraic templates as follows:
\begin{definition}
    A (general) semialgebraic template $\mathit{Inv}_g(\bm{a}, \bm{x})$ is a finite collection 
    of polynomials $I_{rt}(\bm{a}, \bm{x}) :C_{\bm{a}} \times \mathbb{R}^n \mapsto \mathbb{R}$ 
    in $\mathbb{R}[\bm{a}, \bm{x}]$, 
where $C_{\bm{a}} \subseteq \mathbb{R}^m$ is a known compact set. 
    $\bm{a} = (a_1, a_2, \dots, a_m) \in C_{\bm{a}}$ in $I(\bm{a}, \bm{x})$ 
    are referred as parameters of $I(\bm{a}, \bm{x})$.  
    Given a parameter assignment $\bm{a}_0 \in \mathbb{R}^m$, 
    the instantiation of $\mathit{Inv}_g$ w.r.t. $\bm{a}_0$ is 
    the set ${\{ \bm{x} \mid \bigvee_{r} \bigwedge_{t} I_{rt}(\bm{a}_0, \bm{x}) \leq 0 \}}$.
\end{definition}

The key observation here is that every semialgebraic set 
is the projection of a closed basic semialgebraic set \cite{bochnak1998real}. 
The following lemma details the lifting we needed 
and can be easily proven as e.g. a corollary of \cite[Lemma 14.3]{lasserre2012beyond}. 

\begin{lemma}
\label{lem:lifting}
    Let $C \subset \mathbb{R}^d$ be a compact basic semialgebraic set defined as 
    $C = \{ \bm{x} \in \mathbb{R}^d \mid g_v(\bm{x}) \leq 0, v = 1, \dots, m \}$
    and $\forall \bm{x} \in C \ldotp M - \norm{\bm{x}} \geq 0$ for some known $M > 0$.
    For any semialgebraic set 
    $S = \{ \bm{x} \in C \mid \bigvee_{i=1}^{n} \bigwedge_{j=1}^{m} f_{ij}(\bm{x}) \leq 0 \}$ 
    where $f_{ij} \in \mathbb{R}[\bm{x}]$, there exists a basic semialgebraic lifting. 
    In particular, there exists $p, s \in \mathbb{N}$ 
    and polynomials $h_1, h_2, \dots, h_s \in \mathbb{R}[\bm{x}, y_1, y_2, \dots, y_p]$ such that:
    \begin{equation*}
        S = \{ \bm{x} \in C \mid \exists \bm{y} \ldotp (\bigwedge_{k=1}^{s} h_k(\bm{x}, \bm{y}) \leq 0) \land (y_p \leq 0) \}.
    \end{equation*}
\end{lemma}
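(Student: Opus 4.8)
The plan is to exhibit the lifting explicitly, by introducing one \emph{selector} variable per disjunct together with a single \emph{value} variable that will play the role of $y_p$, and then to verify that the projection of the resulting basic closed semialgebraic set onto $\bm{x}$ is exactly $S$. This both reproves the cited \cite[Lemma 14.3]{lasserre2012beyond} in the precise form we need and pins down the normal form $(\bigwedge_k h_k \le 0) \land (y_p \le 0)$.

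First I would rewrite $S$ as a sublevel set of a min--max function. Writing $S_i = \{\bm{x} \in C \mid \bigwedge_{j=1}^m f_{ij}(\bm{x}) \le 0\}$, we have $S = \bigcup_{i=1}^n S_i = \{\bm{x} \in C \mid g(\bm{x}) \le 0\}$, where $g(\bm{x}) = \min_{1 \le i \le n}\max_{1 \le j \le m} f_{ij}(\bm{x})$. By the same Tarski--Seidenberg reasoning used in Lemma~\ref{thm:semiAlge}, $g$ is semialgebraic, so the general principle that every semialgebraic set is the projection of a basic closed semialgebraic set already guarantees that \emph{some} lifting exists; the remaining work is only to produce one in the stated shape.

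Next I would construct the lift concretely. Take $p = n+1$ lifting variables $\bm{y} = (\lambda_1, \dots, \lambda_n, t)$, designate $y_p = t$, and impose the polynomial inequalities $-\lambda_i \le 0$ (for $i = 1,\dots,n$), $\sum_i \lambda_i - 1 \le 0$, $1 - \sum_i \lambda_i \le 0$, and $\lambda_i\,(f_{ij}(\bm{x}) - t) \le 0$ (for all $i,j$); collecting these gives $h_1, \dots, h_s$. The key claim is that, for fixed $\bm{x} \in C$ and fixed $t$, this system is solvable in $\bm{\lambda}$ exactly when $t \ge g(\bm{x})$: the simplex constraints force some $\lambda_{i_0} > 0$, and $\lambda_{i_0}(f_{i_0 j}(\bm{x}) - t) \le 0$ then yields $\max_j f_{i_0 j}(\bm{x}) \le t$, hence $g(\bm{x}) \le t$; conversely, choosing $\lambda_{i_0} = 1$ for an index $i_0$ attaining the minimum witnesses solvability whenever $t \ge g(\bm{x})$. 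Adjoining the designated inequality $y_p = t \le 0$ therefore makes the full lifted system solvable iff $g(\bm{x}) \le 0$, i.e. iff $\bm{x} \in S$, which is the required identity.

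Finally I would record the side conditions. Since $C$ is compact with $\norm{\bm{x}} \le M$ and the selectors satisfy $\bm{\lambda} \in \Delta_{n-1} \subset [0,1]^n$, the lifted set is bounded in $(\bm{x}, \bm{\lambda})$, and $t$ may be confined to a compact interval by adjoining a redundant bound $t \le M'$ (harmless, as only $t \le 0$ affects the projection), so the lift is again compact basic semialgebraic and stays within the Archimedean setting required downstream. The one point demanding care, and the main obstacle, is precisely the encoding of the \emph{disjunction}: a naive product or convex-combination encoding fails to reproduce $S$, and it is the multiplicative selector constraint $\lambda_i(f_{ij} - t) \le 0$ on the simplex that correctly linearizes ``at least one disjunct holds'' into a conjunction of polynomial inequalities; everything else is routine bookkeeping to match the normal form.
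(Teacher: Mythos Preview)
Your construction is correct, and the verification that the projection equals $S$ is clean: the simplex constraints force some $\lambda_{i_0}>0$, whence $f_{i_0 j}(\bm{x})\le t$ for all $j$, and the converse via a vertex of the simplex is immediate. Note, however, that the paper does not actually prove this lemma; it only states that the result ``can be easily proven as e.g.\ a corollary of \cite[Lemma~14.3]{lasserre2012beyond}''. So there is no paper proof to compare against---your explicit selector-on-the-simplex encoding supplies a self-contained argument where the paper simply defers to a citation. In that sense your route is strictly more informative, and the multiplicative constraint $\lambda_i(f_{ij}(\bm{x})-t)\le 0$ is indeed the right device for collapsing the disjunction into a conjunction of polynomial inequalities in the required normal form.

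One small quibble in your closing paragraph: to confine $t$ to a compact interval you would need a \emph{lower} bound, not the upper bound $t\le M'$ you propose (you already have $t\le 0$). In fact no extra constraint is needed at all: solvability of the $\bm{\lambda}$-system forces $t\ge g(\bm{x})$, and $g=\min_i\max_j f_{ij}$ is continuous on the compact set $C$, hence bounded below by some $-M''$, so $t\in[-M'',0]$ automatically. This is peripheral to the lemma as stated, which does not assert compactness of the lift, but since you raise it for downstream Archimedean use it is worth getting the direction right.
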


Based on Lemma~\ref{lem:lifting}, 
we show that the general semialgebraic cases can be treated 
by lifting to basic semialgebraic cases and a two-step approximation. 
First observe that the set
\begin{equation*}
    \{ (\bm{a}, \bm{x}) \in C_{\bm{a}} \times C_{\bm{x}} \mid 
    \bigvee_{r} \bigwedge_{t} I_{rt}(\bm{a}, \bm{x}) \leq 0 \}
\end{equation*}
has a basic semialgebraic lifting. 
According to Lemma~\ref{lem:lifting}, this set can be written as: 
\begin{equation*}
    \{ (\bm{a}, \bm{x}) \in C_{\bm{a}} \times C_{\bm{x}} \mid 
    \exists \bm{y} \ldotp (\bigwedge_{k=1}^{s} h_k(\bm{a}, \bm{x}, \bm{y}) \leq 0) \land (y_p \leq 0) \}
\end{equation*}
for some polynomials $h_1, h_2, \dots, h_s \in \mathbb{R}[\bm{x}, y_1, y_2, \dots, y_p]$.

As both $\bm{a}$ and $\bm{x}$ have known bounds 
(respectively, $C_{\bm{a}}$ and $C_{\bm{x}}$), 
it can be proved that $\bm{y}$ lies in some known compact set $C_{\bm{y}}$. 
Therefore, the 
$\exists \bm{y} \ldotp (\bigwedge_{k=1}^{s} h_k(\bm{a}, \bm{x}, \bm{y}) \leq 0) \land (y_p \leq 0)$ 
part 
can be approximated by conjunctions 
of some polynomial inequalities $p_k(\bm{a}, \bm{x}) \leq 0$ 
using techniques presented in \cite{Lasserre15}. 
The conjunctions $p_k(\bm{a}, \bm{x}) \leq 0$ can now be treated 
as a basic semialgebraic template on which our algorithm can be applied. 

In summary, it has been shown that our algorithm can be used 
to synthesize general semialgebraic invariants once a \emph{lifting} is provided. 
Regarding how to compute such a lifting, readers may refer to \cite{lasserre2012beyond}. 
Unfortunately, completeness results 
such as Theorem~\ref{thm:weakComplete} are difficult to obtain 
and would surely need stronger assumptions.

In practice, our algorithm is less efficient 
for general semialgebraic cases compared to polynomial and basic semialgebraic cases. 
The main reason lies in the lifting process: 
applying lifting dramatically increases 
either the degree of defining polynomials 
or the number of parameters, sometimes even both. 
The sets of valid parameter assignments 
of lifted templates also tend to have more complex boundaries, 
which means higher relaxation degree $d$ is needed. 

\subsection{Non-polynomial Functions}
When the program of interest contains non-polynomial continuous functions 
(such as exponential, logarithmic or trigonometric functions) in conditionals or assignments, 
it is no longer plausible to directly use the techniques presented in previous sections 
to synthesize invariant clusters, even if the template itself is polynomial. 
The main difficulty lies in the solving 
of the (now non-polynomial) optimization problem to approximate $J_i(\bm{a})$. 
In polynomial cases, the optimization problem 
is solved by relaxed to a hierarchy of SOS programs. 
The resulted SOS relaxations are in turn solved by SDP solvers. 
However, when non-polynomial functions are involved, 
both the relaxation and the solving would be problematic: 
Positivstellensatz theorems for non-polynomial functions \cite{lasserre2012beyond} 
require stronger conditions and are harder to use; 
even in cases when a hierarchy of sum-of-squares relaxations can be built, 
such a problem can no longer be easily cast 
as a SDP problem since non-polynomial functions are involved. 

However, that does not mean that 
there is nothing can be done when non-polynomial functions are present. 
A common way of treating them is by \emph{symbolic abstraction} \cite{reps2012symbolic,LZZZ15}. 
The basic idea is to use a formula in the abstract domain 
(in our case, the conjunctions of polynomial inequalities) 
to best overapproximate the ``real meaning'' of the original (non-polynomial) formula. 

In this subsection, we briefly explain how to combine our approach with symbolic abstraction 
to synthesize invariant clusters where non-polynomial continuous functions are present. 
In the following, 
the precondition $\mathit{Pre}(\bm{x})$, the postcondition $\mathit{Post}(\bm{x})$, 
the conditionals $\bm{c}_i(\bm{x})$ and the assignments $\bm{f}_i(\bm{x})$ in Code~\ref{code:model} 
are assumed to contain terms built by non-polynomial continuous functions. 

Let $t_i$ be the collection of all variables 
and non-polynomial terms resulted from the first application 
of a non-polynomial function to a specific term, 
e.g., $\sin(x+y)$, but not $\sin(x)+\sin(y)$. 
The abstract mapping $\mathit{abt}$ can be defined from bottom up as follows: 
\begin{equation*}
    \begin{aligned}
        \mathit{abt}(c) &= c, c \in R \, , \\
        \mathit{abt}(a_j) &= a_j \, ,  \\
        \mathit{abt}(t_i) &= z_i  \, ,  
    \end{aligned} 
    \hspace*{0.5cm} 
    \begin{aligned}
        \mathit{abt}(u \leq 0) &= \mathit{abt}(u) \leq 0 \, , \\
        \mathit{abt}(u_1 + u_2) &= \mathit{abt}(u_1) + \mathit{abt}(u_2) \, ,  \\
        \mathit{abt}(u_1 * u_2) &= \mathit{abt}(u_1) * \mathit{abt}(u_2) \, ,  \\
        \mathit{abt}(w_1 \land w_2) &= \mathit{abt}(w_1) \land \mathit{abt}(w_2) \, .
    \end{aligned}
\end{equation*} 
where $c$ is a constant, $z_i$ are abstraction variables, 
$u$ denotes a term and $w$ denotes a formula. 

Note that the result of $\mathit{abt}$ is conjunctions 
of \emph{polynomial inequalities} as all non-polynomial terms are abstracted as new variables. 
As variables $\bm{x}$ are taken from a known compact set $C_{\bm{x}}$, 
We can find $M_{\bm{x}}$ such that $\forall \bm{x} \in C_{\bm{x}} \ldotp \norm*{x}^2 \leq M_{\bm{x}}$. 
Since all the non-polynomial functions used in the program are assumed to be continuous, 
it can be concluded that the abstracted variables $\bm{z}$ also fall in a compact set $C_{\bm{z}}$ 
satisfying $\forall \bm{z} \in C_{\bm{z}} \ldotp \norm*{z}^2 \leq M_{\bm{z}}$ for some $M_{\bm{z}} > 0$.

We also define \emph{strengthening} of abstraction of formula $w$ as follows:
\begin{definition}
A strengthening $\mathit{abt}_{st}(w)$ of an abstraction $\mathit{abt}(w)$ is 
a conjunction of polynomial inequalities 
satisfying $w \implies \mathit{abt}_{st}(w)$ as well as
$\mathit{abt}_{st}(w) \implies \mathit{abt}(w)$.
\end{definition}

The best (strongest) strengthening of an abstraction $\mathit{abt}(w)$ 
is difficult to find, and sometimes does not even exist. 
One usually needs to resort to heuristics 
regarding the specific non-polynomial functions being abstracted 
to obtain a good strengthening procedure. 
We do not expand further on this. 
Interested readers may refer to e.g., \cite{reps2012symbolic, kincaid2017nonlinear} for some examples. 
We assume in the following that a strengthening procedure is available. 

As in Section~\ref{sec:underapproximation}, we write $R_I$ as the intersection of $R_I(i)$. 
Using abstraction mapping $\mathit{abt}$, 
we can define the following \emph{abstracted valid set} $R_a(0)$ and $R_a(k+1)$ as follows:
\begin{equation*}
\begin{aligned}
    R_a(0) &= \{ \bm{a} \in C_{\bm{a}} \mid \forall \bm{z} \in C_{\bm{z}} \ldotp 
\mathit{abt}_{st}(\mathit{Pre}(\bm{x}) \leq 0) 
        \implies \mathit{abt}(I(\bm{a}, \bm{x}) \leq 0) \} \, ,  \\
R_a(k+1) &= \{ \bm{a} \in C_{\bm{a}} \mid \forall \bm{z} \in C_{\bm{z}} \ldotp 
        \mathit{abt}_{st}(\wedge_{i=1}^k(\bm{c}_i(\bm{x}) \geq 0) \land (I(\bm{a}, \bm{x}) \leq 0)) 
        \implies \mathit{abt}(\mathit{Post}(\bm{x}) \leq 0) \} \, . 
\end{aligned}
\end{equation*} 
Note the use of $\mathit{abt}$ and $\mathit{abt}_{st}$ in different positions in the formulas. 

It can be proved that $R_a(0) \subseteq R_I(0)$ and $R_a(k+1) \subseteq R_I(k+1)$. 
In order to deal with the remaining $R_I(i)$, 
two sets of fresh variables $\{z_i\}$ and $\{z'_i\}$ are needed. 
They represent respectively the abstraction of program states 
before and after the loop executes. 
The abstracted valid set $R_a(i)$ can subsequently be defined as:
\begin{equation*}
\begin{aligned}
    R_a(i) = \{ \bm{a} \in C_{\bm{a}} \mid 
        &\forall \bm{z} \in C_{\bm{z}} \forall \bm{z'} \in C_{\bm{z}} \ldotp 
         (\mathit{abt}_{st}((\bm{c}_i(\bm{x}) \leq 0) \land (I(\bm{a}, \bm{x}) \leq 0)) \\
        &\land \mathit{abt}_{st}(\bm{x}' = \bm{f}_i(\bm{x}))) {\implies} 
            \mathit{abt}(I(\bm{a}, \bm{x}') \leq 0) \} \, , 
\end{aligned}
\end{equation*}
for $i = 1, \dots, k$. 

It can be proved that $R_a(i) \subseteq R_I(i)$. 
Note that $R_a(i)$ no longer contain non-polynomial functions 
and can be underapproximated using the techniques presented in previous sections. 
If a valid parameter assignment $\bm{a}_0$ was found in $R_a(i)$, 
an invariant can be obtained by substituting it back to the template. 
It should be noted though that weak completeness 
(Theorem \ref{thm:weakComplete}) does not hold due to information lost in the abstraction process. 

\begin{example}
Consider synthesize invariant clusters for the program given in Code~\ref{code:nonPolynomial} 
using the template $I(a, x, y) = x + y + a \leq 0$. 
\begin{listing}[!ht]
    \begin{minted}[mathescape, escapeinside=||]{c}
        // Precondition: {$x + y + \sin(\pi x) \leq 0$}
        while (x <= 0) {
            y = y + sin(|$\pi$|x);
            x = x + 2;
            y = y - 2;
            y = y - sin(|$\pi$|x);
        }
        // Postcondition: {$x + y + \sin(\pi x) \leq 2$}
    \end{minted}
    \caption{A non-polynomial program}
    \label{code:nonPolynomial}
\end{listing}

The above abstraction process can be applied 
to build the following abstracted valid sets: 
\begin{equation*}
\begin{aligned}
    R_a(0) = \vphantom{1} & \{ a \in C_{a} \mid \forall \bm{z} \in C_{\bm{z}} \ldotp ((z_1 + z_2 + z_3\leq 0) 
        \land (z_3 \geq -1)) \implies (z_1 + z_2 + a \leq 0) \} \, ,  \\
    R_a(1) = \vphantom{1} & \{ a \in C_{a} \mid \forall \bm{z} \in C_{\bm{z}} \ldotp 
        \forall \bm{z'} \in C_{\bm{z}} \ldotp ((z_1 \leq 0) \land (z'_1 = z_1 + 2) \land (z_1 + z_2 + a \leq 0) \\
        &\land (z'_2 = z_2 - 2 + z_3 - z'_3) \land (z_3 = z_3')) \implies (z'_1 + z'_2 + a \leq 0) \}  \, ,  \\
    R_a(2) = \vphantom{1} & \{ a \in C_{a} \mid \forall \bm{z} \in C_{\bm{z}} \ldotp 
        ((z_1 \geq 0) \land (z_1 + z_2 + a \leq 0)) \land (z_3 \leq 1) \implies (z_1 + z_2 + z_3 \leq 2) \} \, .
\end{aligned}
\end{equation*}
where $z_1 = x$, $z_2 = y$ and $z_3 = \sin(x)$. 
Note that the additional formulas $z_3 \geq -1$. $z_3 \leq 1$ and $z_3 = z'_3$ 
introduced by the strengthening procedure.

The abstracted valid set $R_a(i)$ can then be underapproximated 
by the techniques presented in previous sections. 
A valid parameter assignment $a = -1$ can be extracted from it, 
which gives us the invariant ${\{(x, y) \mid x + y - 1 \leq 0\}}$ 
that can be used to verify the safety property. 
\end{example}

 \section{Experiments}
\label{sec:experiments}

The following experiments are performed on a laptop 
with ... 

We used \verb+SumOfSquares.jl+ package\cite{legat2017sos, weisser2019polynomial} of \verb+Julia+\cite{Julia} 
to invoke the SDP solver of \verb+Mosek+\cite{Mosek} 
to solve the resulted semidefinite programming problems. 
For comparison, we also use \verb+Z3+ version \verb+4.8.0+ 
and \verb+Redlog+ version \verb+3258+ 
to do nonlinear real SMT solving and nonlinear real quantifier elimination. 

\subsection{An Illustrative Example} 
Firstly, we use the following simple program to illustrate the ideas 
of using invariant clusters to prove safety property:  
\begin{listing}[!ht]
\begin{minted}[mathescape]{c}
    // Precondition: {$x^2 + y^2 - 1 \leq 0$} 
    while (*) {
        nx = 0.9 * (x - 0.01 * y);
        ny = 0.9 * (y + 0.01 * x);
        (x, y) = (nx, ny);
    }
    // Postcondition: {$x^2 + (y - 2)^2 - 0.25 \geq 0$}
\end{minted}
\caption{An Illustrative Example}
\label{code:illustritive}
\end{listing}

Considering the system dynamics, 
the following template is used to search for ellipsoid-shaped invariants centered at the origin: 
\begin{equation*}
    \textit{Inv}(a, b, x, y) = x^2 + a y^2 + b \, ,
\end{equation*}
where $a$ and $b$ are parameters with range $0 \leq a \leq 10$, $-10 \leq b \leq 0$. 

By applying the methods given in Section~\ref{sec:underapproximation}, 
we obtain the following underapproximation of degree $2$: 
\begin{equation*}
\begin{aligned}
p_1(a, b) &= 0.832285 a^2 + 4.284474 a+ 5.000000 b + 0.503953 \, , \\
    p_2(a, b) &= 0.832672 b - 0.842673 \, , \\
    p_3(a, b) &= 0.424217 a^2 - 0.919546 a - 0.868952 b - 0.827048 \, , 
\end{aligned}
\end{equation*}
where $p_i(a, b)$ underapproximates the $i$-th condition of Definition~\ref{def:inv}. 
The safety property can be proved by finding a valid solution such that $p_i(a, b) \leq 0$ for all $i$. 
There are, obviously, many such valid solutions for the above $p_i(a, b)$, 
each each solution corresponds to an invariant Code~\ref{code:illustritive}. 
For example, one such solution is 
$(a, b) =  (6.503013, -6.765751)$, 
which corresponds to the invariant 
$x^2 + 6.503012 y^2 - 6.765751 \leq 0$. 

Now, suppose the postcondition is changed to $(x - 3)^2 + y^2 - 0.25 \geq 0$. 
The above invariant is no longer capable of proving the new postcondition. 
Instead of repeating the full procedure, 
we only need to adjust $p_3(a, b)$ (the part concerning postconditions). 
Using the same techniques, we can obtain a new underapproximation 
$p'_3(a, b) = -1.708145 b - 0.343592 $. 

The result of $p_1(a, b)$ and $p_2(a, b)$ can be \emph{reused}. 
A new solution such that $p_1(a, b) \leq 0$, $p_2(a, b) \leq 0$ and $p'_3(a, b) \leq 0$ 
can be found as 
$(a, b) = (1.000000, -3.367025)$, 
which corresponds to the invariant 
$x^2 + 1.000000 y^2 - 3.367025 \leq 0$. 

\subsection{Unicycle Model}
In this section, we consider a simple model of a unicycle \cite{sassi2012controller}: 
\begin{equation*}
\begin{aligned}
    \dot{x} &= v \cos(\theta) \, ,  \\
    \dot{y} &= v \sin(\theta) \, ,  \\
    \dot{\theta} &= w \, .
\end{aligned}
\end{equation*}
where $v$ and $w$ are the inputs. 
By using the change of coordinates 
$z_1 = x \cos(w) + y \sin(w)$ and $z_2 = x \sin(w) - y \cos(w)$, 
we can obtain the following polynomial system: 
\begin{equation*}
\begin{aligned}
    \dot{z_1} &= v - z_2 w \, ,   \\
    \dot{z_2} &= z_1 w \, .
\end{aligned}
\end{equation*}

The control program is given in Code~\ref{code:dubin} by 
discretizing the dynamic model with step size $d = 0.01$:
\begin{listing}[!ht]
\begin{minted}[mathescape]{c}
    // Precondition: {$z_1^2 + (z_2 - 1)^2 - 1 \leq 0$} 
    while (*) {
        w = 1.0178 + 1.8721 * z1 - 0.0253 * z2;
        d = 0.01;
        nz1 = z1 + d * (1 - z2 * w);
        nz2 = z2 + d * (z1 * w);
        (z1, z2) = (nz1, nz2);
    }
    // Postcondition: {$z_1^2 + (z_2 - 1)^2 - 4 \leq 0$}
\end{minted}
\caption{The Unicycle Model}
\label{code:dubin}
\end{listing}

Here, the speed $v$ is taken to be $1$ and the control input $w$ is given by \cite{sassi2012controller}.
Note that the results of \cite{sassi2012controller} is done 
in the continuous context with much smaller initial set,
and it remains unclear whether the above \emph{discrete} program 
satisfies the safety property $z_1^2 + (z_2 - 1)^2 - 4 \leq 0$. 

The form of the postcondition indicates that 
we can try the following quadratic template: 
\begin{equation*}
    \textit{Inv}(a, b, c, z_1, z_2) = z_1^2 + a z_2^2 + b z_2 + c \, .
\end{equation*}

With this template, the problem of synthesizing invariants reduces to 
finding valid assignments of $a$, $b$, $c$ 
such that the following first-order formulas are satisfied: 
\begin{equation*}
\begin{aligned}
    C_1 := \forall z_1 \ldotp \forall z_2 \ldotp 
                    & (z_1^2 + (z_2 - 1)^2 - 1 \leq 0) \implies 
                      (z_1^2 + a z_2^2 + b z_2 + c \leq 0) \, ,  \\
    C_2 := \forall z_1 \ldotp \forall z_2 \ldotp 
                    & ((z_1 \leq 0) \land (z_1^2 + a z_2^2 + b z_2 + c \leq 0)) \implies 
                      (a z_1^2 d^2 w^2 + z_2^2 d^2 w^2 + 2 a z_1 z_2 d w + \phantom{1} \\
                    & b z_1 d w - 2 z_1 z_2 d w - 2 z_2 d^2 w + a z_2^2 + 
                      b z_2 + z_1^2 + 2 z_1 d + d^2 + c \leq 0) \, ,  \\
    C_3 := \forall z_1 \ldotp \forall z_2 \ldotp 
                    & ((z_1 \geq 0) \land (z_1^2 + a z_2^2 + b z_2 + c \leq 0)) \implies 
                      (z_1^2 + (z_2 -1)^2 - 4 \leq 0) \, , 
\end{aligned}
\end{equation*}
where $d = 0.01$ and $w = 1.0178 + 1.8721 z_1 - 0.0253 z_2$.

Using the method presented in this paper 
and set the search range of parameters to be $[-5.0, 5.0]$,
the $p_i$ obtained by solving the degree $2$ sum-of-squares relaxation problem is:
\begin{equation*}
\begin{aligned}
p_1(a, b, c) = {} & 0.090529 a^2 + 0.101026 a b + 0.025383 b^2 + 3.868348 a + 
                        1.919025 b   + 1.000000 c + 0.035542 \, ,  \\
    p_2(a, b, c) = {} & 0.000842 a^2 - 0.000807 a b - 0.000331 a c + 0.000619 b^2 + 
                        0.000074 b c + 0.002646 c^2 + \vphantom{1} \\ 
                      & 0.004036 a   - 0.002039 b - 0.000686 c   - 0.004953 \, ,   \\
    p_3(a, b, c) = {} & 1.027814 a^2 + 1.703057 a b + 1.431318 b^2 - 6.012174 a - 
                        7.317851 b   - 1.125513 c - 7.686978 \, .
\end{aligned}
\end{equation*}

A valid assignment 
\begin{equation*}
(a_0, b_0, c_0) = (1.000000, -2.000000, -2.165579)
\end{equation*}
can be extracted using numeric solvers,
which gives:
\begin{equation*}
\mathit{Inv} = z_1^2 + 1.000000 z_2^2 - 2.000000 z_2 - 2.165579 \leq 0 \, .
\end{equation*} 

Subsequent symbolic checks performed by SMT solver \texttt{Z3} confirmed that 
the above $\mathit{Inv}$ is indeed an invariant.

\subsection{Comparison with CODE2INV} 
In this subsection we compare our algorithm against existing works on invariant synthesis. 
Our main focus is on nonlinear invariant synthesis, which has always been a grand challenge. 
Some existing works do not support nonlinear invariant synthesis (such as \cite{CSS03}), 
and some can only synthesize invariants of equality form \cite{joel2018polynomial}. 
Works based on linear recurrence solving (such as \cite{kincaid2017nonlinear}) 
failed on most test cases here due to no non-trivial closed form solution can be found. 
Furthermore, algorithms based on quantified SMT solving or quantifier elimination cannot terminate 
in a reasonable time (over 24 hours) even for the most simple program. 
Implements of some works 
such as \cite{chatterjee2020polynomial} are also not publicly available. 
Therefore, we mainly compare our methods to 
\verb+CODE2INV+ \cite{si2018learning, si2020code2inv}, 
a state-of-the-art invariant synthesis tool 
based on neural network learning. 

We first test our algorithm on the nonlinear programs 
in the benchmark provided by \verb+CODE2INV+ \cite{si2018learning, si2020code2inv}. 
They can be found on \url{https://github.com/PL-ML/code2inv}. 

It should be noted that our algorithm 
only considers real variables and invariants of inequality form. 
Some of the benchmarks are modified slightly so that our algorithm may apply. 

More specifically, our modifications include:
\begin{itemize}
    \item For those test cases that require invariants of equality form, 
          we break the verification task into two tasks 
          and try to prove the ``less than'' part and ``greater than'' part respectively.
    \item For those test cases that require the properties of integer arithmetic, 
          we try to relax them a little bit so that the safety property still holds 
          even in the real context. 
\end{itemize}

Regarding the choice of templates in those test cases, 
first we try a template that include all monomials in postconditions. 
If that does not work, we include monomials that appeared in the programs one by one. 

The experiment results are summarized in Table \ref{tab:code2inv}. 
Time out is set to $60$ minutes. 

\begin{table}[H]
\caption{CODE2INV Nonlinear Benchmark}
\label{tab:code2inv}
\begin{minipage}{\columnwidth}
\begin{center}
\begin{tabular}{lll}
\toprule
     & Time (ours) & Time (\verb+CODE2INV+) \\
\hline
\verb+nl-1+ & 1m40.658s & 1.763s \\
\verb+nl-2+ & 1m43.269s & 1m26.493s\\
\verb+nl-3+ & 2m2.762s & Time Out \\
\verb+nl-4+ & 29m29.777s & Time Out \\
\verb+nl-5+ & Time Out & 3m50.342s \\
\verb+nl-6+ & Time Out & Time Out \\
\verb+nl-7+ & Time Out & 34.948s \\
\end{tabular}
\end{center}
\end{minipage}
\end{table}

These experiments results show that our algorithm 
is comparable to \verb+CODE2INV+ on its nonlinear benchmarks. 
As \verb+CODE2INV+ provides no guarantee of any sense of completeness at all, 
(our algorithm, on the other hand, 
has a weak completeness result stated as Theorem~\ref{thm:weakComplete}), 
it seems to be safe to state that our algorithm 
achieve the goal of getting both theoretical completeness and practical performance. 

Next, we test our methods on a series of control programs 
obtained by discretizing nonlinear dynamical systems. 
This type of programs typically appears 
when simulating or controlling dynamical systems. 
The programs \verb+dubins+ and \verb+dubins_disturbed+ 
have been given in previous subsections, 
and programs \verb+L1+ to \verb+L6+ are given in supplemental text. 
Though the safety of their continuous counterparts 
can be proved relatively easily, 
it remains a question whether they are still safe 
after discretizing (with certain step size $d$).

In all these cases, the range of variable is set to $[-100, 100]$ 
and range of template parameter is set to $[-5, 5]$ during computation. 
Time out for \verb+L1+ to \verb+L6+ is set to $6$ hours, 
and time out for \verb+dubins+ and \verb+dubins_disturbed+ is set to $24$ hours. 

The template is chosen to include all monomials appeared in the postcondition. 
If that does not work, monomials appeared in the programs are added one by one. 
Table~\ref{tab:benchmark} summarises the experiment results. 

\begin{table}[H]
\caption{Control Programs of Nonlinear Systems}
\label{tab:benchmark}
\begin{minipage}{\columnwidth}
\begin{center}
\begin{tabular}{lll}
\toprule
   & Time (ours) & Time (\verb+CODE2INV+) \\
\hline
\verb+L1+ & 3m48.467s & 43m23.052s \\
\verb+L2+ & 4m19.208s  & Exceptions\footnote{\texttt{CODE2INV} reports exceptions \label{refnote}} \\ 
\verb+L3+ & 4m10.369s  & Exceptions\footref{refnote} \\
\verb+L4+ & 4h37m8.891s  & Time Out \\ 
\verb+L5+ & 5m26.691s & Time Out \\
\verb+L6+ & Time Out & Time Out \\
\verb+dubins+ & 39m13.291s & Time Out \\
\verb+dubins_disturbed+ & 23h45m43.014s & Time Out \\
\end{tabular}
\end{center}
\end{minipage}
\end{table}

It can be seen from these results that our algorithm 
significantly outperforms \verb+CODE2INV+ in these test cases. 
The main reason seems to be that these control programs tend to 
have much more complicated behaviours (such as cubic terms), 
which significantly slows down the computations of \verb+CODE2INV+. 
On the other hand, our algorithm scales much better 
when higher degrees terms are presented.

 \section{Conclusions and Future Work}
\label{sec:conclusion}
In this paper, we presented a novel way 
to synthesize basic semialgebraic invariants using SDP 
based on Lasserre's results in \cite{Lasserre10, Lasserre15}. 
Unlike symbolic methods such as SMT solving or quantifier elimination, 
our approach admits the efficiency brought by SDP solving 
and outperforms them greatly 
(as the decision process they used have double exponential complexity \cite{DH88}); 
on the other hand, we also proved a weak completeness result stating that 
when some non-degenerate conditions are satisfied, 
our algorithm guarantees to find an invariant, 
the like of which is previously only provided by symbolic methods. 

In future work, we are interested in exploiting more advanced methods 
to deal with general semialgebraic templates. 
Moreover, we plan to extend the techniques presented in this paper 
to invariant synthesis for polynomial dynamical systems and hybrid systems. 
We also would like to investigate the possible use of moments 
in invariant synthesis for stochastic dynamical systems. 
Finally, we would like to further analyse the numeric errors
introduced by solving SDP in our algorithm.

\bibliography{inv}


\begin{thebibliography}{45}


\ifx \showCODEN    \undefined \def \showCODEN     #1{\unskip}     \fi
\ifx \showDOI      \undefined \def \showDOI       #1{#1}\fi
\ifx \showISBNx    \undefined \def \showISBNx     #1{\unskip}     \fi
\ifx \showISBNxiii \undefined \def \showISBNxiii  #1{\unskip}     \fi
\ifx \showISSN     \undefined \def \showISSN      #1{\unskip}     \fi
\ifx \showLCCN     \undefined \def \showLCCN      #1{\unskip}     \fi
\ifx \shownote     \undefined \def \shownote      #1{#1}          \fi
\ifx \showarticletitle \undefined \def \showarticletitle #1{#1}   \fi
\ifx \showURL      \undefined \def \showURL       {\relax}        \fi
\providecommand\bibfield[2]{#2}
\providecommand\bibinfo[2]{#2}
\providecommand\natexlab[1]{#1}
\providecommand\showeprint[2][]{arXiv:#2}

\bibitem[\protect\citeauthoryear{ApS}{ApS}{2019}]%
        {Mosek}
\bibfield{author}{\bibinfo{person}{MOSEK ApS}.}
  \bibinfo{year}{2019}\natexlab{}.
\newblock \bibinfo{booktitle}{\emph{The MOSEK optimization toolbox for MATLAB
  manual. Version 9.0.}}
\newblock
\urldef\tempurl%
\url{http://docs.mosek.com/9.0/toolbox/index.html}
\showURL{%
\tempurl}


\bibitem[\protect\citeauthoryear{Bezanson, Edelman, Karpinski, and
  Shah}{Bezanson et~al\mbox{.}}{2017}]%
        {Julia}
\bibfield{author}{\bibinfo{person}{Jeff Bezanson}, \bibinfo{person}{Alan
  Edelman}, \bibinfo{person}{Stefan Karpinski}, {and} \bibinfo{person}{Viral~B.
  Shah}.} \bibinfo{year}{2017}\natexlab{}.
\newblock \showarticletitle{Julia: A fresh approach to numerical computing}.
\newblock \bibinfo{journal}{\emph{SIAM review}} \bibinfo{volume}{59},
  \bibinfo{number}{1} (\bibinfo{year}{2017}), \bibinfo{pages}{65--98}.
\newblock


\bibitem[\protect\citeauthoryear{Blekherman, Parrilo, and Thomas}{Blekherman
  et~al\mbox{.}}{2012}]%
        {blekherman2012semidefinite}
\bibfield{author}{\bibinfo{person}{Grigoriy Blekherman},
  \bibinfo{person}{Pablo~A Parrilo}, {and} \bibinfo{person}{Rekha~R Thomas}.}
  \bibinfo{year}{2012}\natexlab{}.
\newblock \bibinfo{booktitle}{\emph{Semidefinite optimization and convex
  algebraic geometry}}.
\newblock \bibinfo{publisher}{SIAM}.
\newblock


\bibitem[\protect\citeauthoryear{Bochnak, Coste, and Roy}{Bochnak
  et~al\mbox{.}}{1998}]%
        {bochnak1998real}
\bibfield{author}{\bibinfo{person}{Jacek Bochnak}, \bibinfo{person}{Michel
  Coste}, {and} \bibinfo{person}{Marie-Fran{\c{c}}oise Roy}.}
  \bibinfo{year}{1998}\natexlab{}.
\newblock \bibinfo{booktitle}{\emph{Real algebraic geometry}}.
  Vol.~\bibinfo{volume}{36}.
\newblock \bibinfo{publisher}{Springer Science \& Business Media}.
\newblock


\bibitem[\protect\citeauthoryear{Chatterjee, Fu, Goharshady, and
  Goharshady}{Chatterjee et~al\mbox{.}}{2020}]%
        {chatterjee2020polynomial}
\bibfield{author}{\bibinfo{person}{Krishnendu Chatterjee},
  \bibinfo{person}{Hongfei Fu}, \bibinfo{person}{Amir~Kafshdar Goharshady},
  {and} \bibinfo{person}{Ehsan~Kafshdar Goharshady}.}
  \bibinfo{year}{2020}\natexlab{}.
\newblock \showarticletitle{Polynomial invariant generation for
  non-deterministic recursive programs}. In \bibinfo{booktitle}{\emph{{PLDI}
  2020}}. \bibinfo{pages}{672--687}.
\newblock


\bibitem[\protect\citeauthoryear{Col{\'o}n, Sankaranarayanan, and
  Sipma}{Col{\'o}n et~al\mbox{.}}{2003}]%
        {CSS03}
\bibfield{author}{\bibinfo{person}{Michael~A Col{\'o}n},
  \bibinfo{person}{Sriram Sankaranarayanan}, {and} \bibinfo{person}{Henny~B
  Sipma}.} \bibinfo{year}{2003}\natexlab{}.
\newblock \showarticletitle{Linear invariant generation using non-linear
  constraint solving}. In \bibinfo{booktitle}{\emph{{CAV} 2003}}. Springer,
  \bibinfo{pages}{420--432}.
\newblock


\bibitem[\protect\citeauthoryear{Dai, Xia, and Zhan}{Dai et~al\mbox{.}}{2013}]%
        {DXZ13}
\bibfield{author}{\bibinfo{person}{Liyun Dai}, \bibinfo{person}{Bican Xia},
  {and} \bibinfo{person}{Naijun Zhan}.} \bibinfo{year}{2013}\natexlab{}.
\newblock \showarticletitle{Generating non-linear interpolants by semidefinite
  programming}. In \bibinfo{booktitle}{\emph{{CAV} 2013}}. Springer,
  \bibinfo{pages}{364--380}.
\newblock


\bibitem[\protect\citeauthoryear{Davenport and Heintz}{Davenport and
  Heintz}{1988}]%
        {DH88}
\bibfield{author}{\bibinfo{person}{James~H Davenport} {and}
  \bibinfo{person}{Joos Heintz}.} \bibinfo{year}{1988}\natexlab{}.
\newblock \showarticletitle{Real quantifier elimination is doubly exponential}.
\newblock \bibinfo{journal}{\emph{Journal of Symbolic Computation}}
  \bibinfo{volume}{5}, \bibinfo{number}{1-2} (\bibinfo{year}{1988}),
  \bibinfo{pages}{29--35}.
\newblock


\bibitem[\protect\citeauthoryear{De~Moura and Bj{\o}rner}{De~Moura and
  Bj{\o}rner}{2008}]%
        {Z3}
\bibfield{author}{\bibinfo{person}{Leonardo De~Moura} {and}
  \bibinfo{person}{Nikolaj Bj{\o}rner}.} \bibinfo{year}{2008}\natexlab{}.
\newblock \showarticletitle{Z3: An efficient SMT solver}. In
  \bibinfo{booktitle}{\emph{{TACAS} 2008}}. Springer,
  \bibinfo{pages}{337--340}.
\newblock


\bibitem[\protect\citeauthoryear{Djaballah, Chapoutot, Kieffer, and
  Bouissou}{Djaballah et~al\mbox{.}}{2017}]%
        {djaballah2017construction}
\bibfield{author}{\bibinfo{person}{Adel Djaballah}, \bibinfo{person}{Alexandre
  Chapoutot}, \bibinfo{person}{Michel Kieffer}, {and} \bibinfo{person}{Olivier
  Bouissou}.} \bibinfo{year}{2017}\natexlab{}.
\newblock \showarticletitle{Construction of parametric barrier functions for
  dynamical systems using interval analysis}.
\newblock \bibinfo{journal}{\emph{Autom.}}  \bibinfo{volume}{78}
  (\bibinfo{year}{2017}), \bibinfo{pages}{287--296}.
\newblock
\urldef\tempurl%
\url{https://doi.org/10.1016/j.automatica.2016.12.013}
\showDOI{\tempurl}


\bibitem[\protect\citeauthoryear{Dolzmann and Sturm}{Dolzmann and
  Sturm}{1997}]%
        {dolzmann1997}
\bibfield{author}{\bibinfo{person}{Andreas Dolzmann} {and}
  \bibinfo{person}{Thomas Sturm}.} \bibinfo{year}{1997}\natexlab{}.
\newblock \showarticletitle{Redlog: Computer algebra meets computer logic}.
\newblock \bibinfo{journal}{\emph{Acm Sigsam Bulletin}} \bibinfo{volume}{31},
  \bibinfo{number}{2} (\bibinfo{year}{1997}), \bibinfo{pages}{2--9}.
\newblock


\bibitem[\protect\citeauthoryear{Floyd}{Floyd}{1967}]%
        {floyd67}
\bibfield{author}{\bibinfo{person}{Robert~W Floyd}.}
  \bibinfo{year}{1967}\natexlab{}.
\newblock \showarticletitle{Assigning meanings to programs}.
\newblock \bibinfo{journal}{\emph{Mathematical Aspects of Computer Science}}
  \bibinfo{volume}{19}, \bibinfo{number}{19-32} (\bibinfo{year}{1967}),
  \bibinfo{pages}{1}.
\newblock


\bibitem[\protect\citeauthoryear{Gan, Dai, Xia, Zhan, Kapur, and Chen}{Gan
  et~al\mbox{.}}{2016}]%
        {ijcar16}
\bibfield{author}{\bibinfo{person}{Ting Gan}, \bibinfo{person}{Liyun Dai},
  \bibinfo{person}{Bican Xia}, \bibinfo{person}{Naijun Zhan},
  \bibinfo{person}{Deepak Kapur}, {and} \bibinfo{person}{Mingshuai Chen}.}
  \bibinfo{year}{2016}\natexlab{}.
\newblock \showarticletitle{Interpolant synthesis for quadratic polynomial
  inequalities and combination with EUF}. In \bibinfo{booktitle}{\emph{{IJCAR}
  2016}}. Springer, \bibinfo{pages}{195--212}.
\newblock


\bibitem[\protect\citeauthoryear{Gan, Xia, Xue, Zhan, and Dai}{Gan
  et~al\mbox{.}}{2020}]%
        {Gan2020}
\bibfield{author}{\bibinfo{person}{Ting Gan}, \bibinfo{person}{Bican Xia},
  \bibinfo{person}{Bai Xue}, \bibinfo{person}{Naijun Zhan}, {and}
  \bibinfo{person}{Liyun Dai}.} \bibinfo{year}{2020}\natexlab{}.
\newblock \showarticletitle{Nonlinear Craig interpolant generation}. In
  \bibinfo{booktitle}{\emph{{CAV} 2020}}. Springer, \bibinfo{pages}{415--438}.
\newblock


\bibitem[\protect\citeauthoryear{Garg, Neider, Madhusudan, and Roth}{Garg
  et~al\mbox{.}}{2016}]%
        {Garg16}
\bibfield{author}{\bibinfo{person}{Pranav Garg}, \bibinfo{person}{Daniel
  Neider}, \bibinfo{person}{Parthasarathy Madhusudan}, {and}
  \bibinfo{person}{Dan Roth}.} \bibinfo{year}{2016}\natexlab{}.
\newblock \showarticletitle{Learning invariants using decision trees and
  implication counterexamples}.
\newblock \bibinfo{journal}{\emph{ACM Sigplan Notices}} \bibinfo{volume}{51},
  \bibinfo{number}{1}, \bibinfo{pages}{499--512}.
\newblock


\bibitem[\protect\citeauthoryear{German and Wegbreit}{German and
  Wegbreit}{1975}]%
        {gw75}
\bibfield{author}{\bibinfo{person}{Steven~M German} {and} \bibinfo{person}{Ben
  Wegbreit}.} \bibinfo{year}{1975}\natexlab{}.
\newblock \showarticletitle{A synthesizer of inductive assertions}.
\newblock \bibinfo{journal}{\emph{IEEE transactions on Software Engineering}}
  \bibinfo{number}{1} (\bibinfo{year}{1975}), \bibinfo{pages}{68--75}.
\newblock


\bibitem[\protect\citeauthoryear{He and Han}{He and Han}{2020}]%
        {oopsla2020termination}
\bibfield{author}{\bibinfo{person}{Fei He} {and} \bibinfo{person}{Jitao Han}.}
  \bibinfo{year}{2020}\natexlab{}.
\newblock \showarticletitle{Termination analysis for evolving programs: an
  incremental approach by reusing certified modules}.
\newblock \bibinfo{journal}{\emph{Proc. {ACM} Program. Lang.}}
  \bibinfo{volume}{4}, \bibinfo{number}{{OOPSLA}} (\bibinfo{year}{2020}),
  \bibinfo{pages}{199:1--199:27}.
\newblock
\urldef\tempurl%
\url{https://doi.org/10.1145/3428267}
\showDOI{\tempurl}


\bibitem[\protect\citeauthoryear{Henrion, Naldi, and Safey El~Din}{Henrion
  et~al\mbox{.}}{2018}]%
        {Henrion18}
\bibfield{author}{\bibinfo{person}{Didier Henrion}, \bibinfo{person}{Simone
  Naldi}, {and} \bibinfo{person}{Mohab Safey El~Din}.}
  \bibinfo{year}{2018}\natexlab{}.
\newblock \showarticletitle{Exact algorithms for semidefinite programs with
  degenerate feasible set}. In \bibinfo{booktitle}{\emph{{ISSAC} 2018}}.
  \bibinfo{pages}{191--198}.
\newblock


\bibitem[\protect\citeauthoryear{Hoare}{Hoare}{1969}]%
        {hoare69}
\bibfield{author}{\bibinfo{person}{Charles Antony~Richard Hoare}.}
  \bibinfo{year}{1969}\natexlab{}.
\newblock \showarticletitle{An axiomatic basis for computer programming}.
\newblock \bibinfo{journal}{\emph{Commun. ACM}} \bibinfo{volume}{12},
  \bibinfo{number}{10} (\bibinfo{year}{1969}), \bibinfo{pages}{576--580}.
\newblock


\bibitem[\protect\citeauthoryear{Hrushovski, Ouaknine, Pouly, and
  Worrell}{Hrushovski et~al\mbox{.}}{2018}]%
        {joel2018polynomial}
\bibfield{author}{\bibinfo{person}{Ehud Hrushovski}, \bibinfo{person}{Jo{\"e}l
  Ouaknine}, \bibinfo{person}{Amaury Pouly}, {and} \bibinfo{person}{James
  Worrell}.} \bibinfo{year}{2018}\natexlab{}.
\newblock \showarticletitle{Polynomial invariants for affine programs}. In
  \bibinfo{booktitle}{\emph{{LICS} 2018}}. \bibinfo{pages}{530--539}.
\newblock


\bibitem[\protect\citeauthoryear{Joldes, Muller, and Popescu}{Joldes
  et~al\mbox{.}}{2017}]%
        {JoldesMP17}
\bibfield{author}{\bibinfo{person}{Mioara Joldes},
  \bibinfo{person}{Jean{-}Michel Muller}, {and} \bibinfo{person}{Valentina
  Popescu}.} \bibinfo{year}{2017}\natexlab{}.
\newblock \showarticletitle{Implementation and Performance Evaluation of an
  Extended Precision Floating-Point Arithmetic Library for High-Accuracy
  Semidefinite Programming}. In \bibinfo{booktitle}{\emph{24th {IEEE} Symposium
  on Computer Arithmetic, {ARITH} 2017, London, United Kingdom, July 24-26,
  2017}}, \bibfield{editor}{\bibinfo{person}{Neil Burgess},
  \bibinfo{person}{Javier~D. Bruguera}, {and} \bibinfo{person}{Florent
  de~Dinechin}} (Eds.). \bibinfo{publisher}{{IEEE} Computer Society},
  \bibinfo{pages}{27--34}.
\newblock
\urldef\tempurl%
\url{https://doi.org/10.1109/ARITH.2017.18}
\showDOI{\tempurl}


\bibitem[\protect\citeauthoryear{Josz and Henrion}{Josz and Henrion}{2016}]%
        {Josz2016}
\bibfield{author}{\bibinfo{person}{C{\'e}dric Josz} {and}
  \bibinfo{person}{Didier Henrion}.} \bibinfo{year}{2016}\natexlab{}.
\newblock \showarticletitle{Strong duality in Lasserre’s hierarchy for
  polynomial optimization}.
\newblock \bibinfo{journal}{\emph{Optimization Letters}} \bibinfo{volume}{10},
  \bibinfo{number}{1} (\bibinfo{year}{2016}), \bibinfo{pages}{3--10}.
\newblock


\bibitem[\protect\citeauthoryear{Kapur}{Kapur}{2006}]%
        {Kapur06}
\bibfield{author}{\bibinfo{person}{Deepak Kapur}.}
  \bibinfo{year}{2006}\natexlab{}.
\newblock \showarticletitle{A quantifier-elimination based heuristic for
  automatically generating inductive assertions for programs}.
\newblock \bibinfo{journal}{\emph{Journal of Systems Science and Complexity}}
  \bibinfo{volume}{19}, \bibinfo{number}{3} (\bibinfo{year}{2006}),
  \bibinfo{pages}{307--330}.
\newblock


\bibitem[\protect\citeauthoryear{Karr}{Karr}{1976}]%
        {karr76}
\bibfield{author}{\bibinfo{person}{Michael Karr}.}
  \bibinfo{year}{1976}\natexlab{}.
\newblock \showarticletitle{Affine relationships among variables of a program}.
\newblock \bibinfo{journal}{\emph{Acta informatica}} \bibinfo{volume}{6},
  \bibinfo{number}{2} (\bibinfo{year}{1976}), \bibinfo{pages}{133--151}.
\newblock


\bibitem[\protect\citeauthoryear{Katz and Manna}{Katz and Manna}{1976}]%
        {km76}
\bibfield{author}{\bibinfo{person}{Shmuel Katz} {and} \bibinfo{person}{Zohar
  Manna}.} \bibinfo{year}{1976}\natexlab{}.
\newblock \showarticletitle{Logical analysis of programs}.
\newblock \bibinfo{journal}{\emph{Commun. ACM}} \bibinfo{volume}{19},
  \bibinfo{number}{4} (\bibinfo{year}{1976}), \bibinfo{pages}{188--206}.
\newblock


\bibitem[\protect\citeauthoryear{Kincaid, Cyphert, Breck, and Reps}{Kincaid
  et~al\mbox{.}}{2018}]%
        {kincaid2017nonlinear}
\bibfield{author}{\bibinfo{person}{Zachary Kincaid}, \bibinfo{person}{John
  Cyphert}, \bibinfo{person}{Jason Breck}, {and} \bibinfo{person}{Thomas
  Reps}.} \bibinfo{year}{2018}\natexlab{}.
\newblock \showarticletitle{Non-linear reasoning for invariant synthesis}.
\newblock \bibinfo{journal}{\emph{{POPL} 2018}} \bibinfo{volume}{2},
  \bibinfo{number}{POPL} (\bibinfo{year}{2018}), \bibinfo{pages}{1--33}.
\newblock


\bibitem[\protect\citeauthoryear{Kong, Bogomolov, Schilling, Jiang, and
  Henzinger}{Kong et~al\mbox{.}}{2017}]%
        {kong2017clusters}
\bibfield{author}{\bibinfo{person}{Hui Kong}, \bibinfo{person}{Sergiy
  Bogomolov}, \bibinfo{person}{Christian Schilling}, \bibinfo{person}{Yu
  Jiang}, {and} \bibinfo{person}{Thomas~A. Henzinger}.}
  \bibinfo{year}{2017}\natexlab{}.
\newblock \showarticletitle{Safety Verification of Nonlinear Hybrid Systems
  Based on Invariant Clusters}. In \bibinfo{booktitle}{\emph{Proceedings of the
  20th International Conference on Hybrid Systems: Computation and Control,
  {HSCC} 2017, Pittsburgh, PA, USA, April 18-20, 2017}},
  \bibfield{editor}{\bibinfo{person}{Goran Frehse} {and} \bibinfo{person}{Sayan
  Mitra}} (Eds.). \bibinfo{publisher}{{ACM}}, \bibinfo{pages}{163--172}.
\newblock
\urldef\tempurl%
\url{https://doi.org/10.1145/3049797.3049814}
\showDOI{\tempurl}


\bibitem[\protect\citeauthoryear{Lasserre}{Lasserre}{2001}]%
        {lasserre2000global}
\bibfield{author}{\bibinfo{person}{Jean~B Lasserre}.}
  \bibinfo{year}{2001}\natexlab{}.
\newblock \showarticletitle{Global optimization with polynomials and the
  problem of moments}.
\newblock \bibinfo{journal}{\emph{SIAM Journal on optimization}}
  \bibinfo{volume}{11}, \bibinfo{number}{3} (\bibinfo{year}{2001}),
  \bibinfo{pages}{796--817}.
\newblock


\bibitem[\protect\citeauthoryear{Lasserre}{Lasserre}{2010}]%
        {Lasserre10}
\bibfield{author}{\bibinfo{person}{Jean~B Lasserre}.}
  \bibinfo{year}{2010}\natexlab{}.
\newblock \showarticletitle{A “joint+ marginal” approach to parametric
  polynomial optimization}.
\newblock \bibinfo{journal}{\emph{SIAM Journal on Optimization}}
  \bibinfo{volume}{20}, \bibinfo{number}{4} (\bibinfo{year}{2010}),
  \bibinfo{pages}{1995--2022}.
\newblock


\bibitem[\protect\citeauthoryear{Lasserre}{Lasserre}{2015}]%
        {Lasserre15}
\bibfield{author}{\bibinfo{person}{Jean~B Lasserre}.}
  \bibinfo{year}{2015}\natexlab{}.
\newblock \showarticletitle{Tractable approximations of sets defined with
  quantifiers}.
\newblock \bibinfo{journal}{\emph{Mathematical Programming}}
  \bibinfo{volume}{151}, \bibinfo{number}{2} (\bibinfo{year}{2015}),
  \bibinfo{pages}{507--527}.
\newblock


\bibitem[\protect\citeauthoryear{Lasserre and Putinar}{Lasserre and
  Putinar}{2012}]%
        {lasserre2012beyond}
\bibfield{author}{\bibinfo{person}{Jean~B Lasserre} {and}
  \bibinfo{person}{Mihai Putinar}.} \bibinfo{year}{2012}\natexlab{}.
\newblock \showarticletitle{Positivity and optimization: beyond polynomials}.
\newblock In \bibinfo{booktitle}{\emph{Handbook on Semidefinite, Conic and
  Polynomial Optimization}}. \bibinfo{publisher}{Springer},
  \bibinfo{pages}{407--434}.
\newblock


\bibitem[\protect\citeauthoryear{Legat, Coey, Deits, Huchette, and Perry}{Legat
  et~al\mbox{.}}{2017}]%
        {legat2017sos}
\bibfield{author}{\bibinfo{person}{Beno{\^\i}t Legat}, \bibinfo{person}{Chris
  Coey}, \bibinfo{person}{Robin Deits}, \bibinfo{person}{Joey Huchette}, {and}
  \bibinfo{person}{Amelia Perry}.} \bibinfo{year}{2017}\natexlab{}.
\newblock \showarticletitle{{Sum-of-squares optimization in Julia}}. In
  \bibinfo{booktitle}{\emph{The First Annual JuMP-dev Workshop}}.
\newblock


\bibitem[\protect\citeauthoryear{Liu, Zhan, Zhao, and Zou}{Liu
  et~al\mbox{.}}{2015}]%
        {LZZZ15}
\bibfield{author}{\bibinfo{person}{Jiang Liu}, \bibinfo{person}{Naijun Zhan},
  \bibinfo{person}{Hengjun Zhao}, {and} \bibinfo{person}{Liang Zou}.}
  \bibinfo{year}{2015}\natexlab{}.
\newblock \showarticletitle{Abstraction of elementary hybrid systems by
  variable transformation}. In \bibinfo{booktitle}{\emph{{FM} 2015}}. Springer,
  \bibinfo{pages}{360--377}.
\newblock


\bibitem[\protect\citeauthoryear{Nakata}{Nakata}{2010}]%
        {nakata2010numerical}
\bibfield{author}{\bibinfo{person}{Maho Nakata}.}
  \bibinfo{year}{2010}\natexlab{}.
\newblock \showarticletitle{A numerical evaluation of highly accurate
  multiple-precision arithmetic version of semidefinite programming solver:
  SDPA-GMP,-QD and-DD.}. In \bibinfo{booktitle}{\emph{2010 IEEE International
  Symposium on Computer-Aided Control System Design}}. IEEE,
  \bibinfo{pages}{29--34}.
\newblock


\bibitem[\protect\citeauthoryear{Naur}{Naur}{1966}]%
        {naur66}
\bibfield{author}{\bibinfo{person}{Peter Naur}.}
  \bibinfo{year}{1966}\natexlab{}.
\newblock \showarticletitle{Proof of algorithms by general snapshots}.
\newblock \bibinfo{journal}{\emph{BIT Numerical Mathematics}}
  \bibinfo{volume}{6}, \bibinfo{number}{4} (\bibinfo{year}{1966}),
  \bibinfo{pages}{310--316}.
\newblock


\bibitem[\protect\citeauthoryear{Rodr{\'\i}guez-Carbonell and
  Kapur}{Rodr{\'\i}guez-Carbonell and Kapur}{2004}]%
        {rck04a}
\bibfield{author}{\bibinfo{person}{Enric Rodr{\'\i}guez-Carbonell} {and}
  \bibinfo{person}{Deepak Kapur}.} \bibinfo{year}{2004}\natexlab{}.
\newblock \showarticletitle{An abstract interpretation approach for automatic
  generation of polynomial invariants}. In \bibinfo{booktitle}{\emph{{SAS}
  2004}}. Springer, \bibinfo{pages}{280--295}.
\newblock


\bibitem[\protect\citeauthoryear{Rodr{\'{\i}}guez{-}Carbonell and
  Kapur}{Rodr{\'{\i}}guez{-}Carbonell and Kapur}{2007}]%
        {kapur2007generating}
\bibfield{author}{\bibinfo{person}{Enric Rodr{\'{\i}}guez{-}Carbonell} {and}
  \bibinfo{person}{Deepak Kapur}.} \bibinfo{year}{2007}\natexlab{}.
\newblock \showarticletitle{Generating all polynomial invariants in simple
  loops}.
\newblock \bibinfo{journal}{\emph{J. Symb. Comput.}} \bibinfo{volume}{42},
  \bibinfo{number}{4} (\bibinfo{year}{2007}), \bibinfo{pages}{443--476}.
\newblock
\urldef\tempurl%
\url{https://doi.org/10.1016/j.jsc.2007.01.002}
\showDOI{\tempurl}


\bibitem[\protect\citeauthoryear{Roux, Voronin, and Sankaranarayanan}{Roux
  et~al\mbox{.}}{2018}]%
        {RVS16}
\bibfield{author}{\bibinfo{person}{Pierre Roux}, \bibinfo{person}{Yuen-Lam
  Voronin}, {and} \bibinfo{person}{Sriram Sankaranarayanan}.}
  \bibinfo{year}{2018}\natexlab{}.
\newblock \showarticletitle{Validating numerical semidefinite programming
  solvers for polynomial invariants}.
\newblock \bibinfo{journal}{\emph{Formal Methods in System Design}}
  \bibinfo{volume}{53}, \bibinfo{number}{2} (\bibinfo{year}{2018}),
  \bibinfo{pages}{286--312}.
\newblock


\bibitem[\protect\citeauthoryear{Sankaranarayanan, Sipma, and
  Manna}{Sankaranarayanan et~al\mbox{.}}{2004}]%
        {Sankaranarayanan04}
\bibfield{author}{\bibinfo{person}{Sriram Sankaranarayanan},
  \bibinfo{person}{Henny~B Sipma}, {and} \bibinfo{person}{Zohar Manna}.}
  \bibinfo{year}{2004}\natexlab{}.
\newblock \showarticletitle{Non-linear loop invariant generation using
  Gr{\"o}bner bases}. In \bibinfo{booktitle}{\emph{{POPL} 2004}}.
  \bibinfo{pages}{318--329}.
\newblock


\bibitem[\protect\citeauthoryear{Sassi and Girard}{Sassi and Girard}{2012}]%
        {sassi2012controller}
\bibfield{author}{\bibinfo{person}{Mohamed Amin~Ben Sassi} {and}
  \bibinfo{person}{Antoine Girard}.} \bibinfo{year}{2012}\natexlab{}.
\newblock \showarticletitle{Controller synthesis for robust invariance of
  polynomial dynamical systems using linear programming}.
\newblock \bibinfo{journal}{\emph{Systems \& control letters}}
  \bibinfo{volume}{61}, \bibinfo{number}{4} (\bibinfo{year}{2012}),
  \bibinfo{pages}{506--512}.
\newblock


\bibitem[\protect\citeauthoryear{Si, Dai, Raghothaman, Naik, and Song}{Si
  et~al\mbox{.}}{2018}]%
        {si2018learning}
\bibfield{author}{\bibinfo{person}{Xujie Si}, \bibinfo{person}{Hanjun Dai},
  \bibinfo{person}{Mukund Raghothaman}, \bibinfo{person}{Mayur Naik}, {and}
  \bibinfo{person}{Le Song}.} \bibinfo{year}{2018}\natexlab{}.
\newblock \showarticletitle{Learning loop invariants for program verification}.
  In \bibinfo{booktitle}{\emph{{NIPS} 2018}}. \bibinfo{pages}{7751--7762}.
\newblock


\bibitem[\protect\citeauthoryear{Si, Naik, Dai, Naik, and Song}{Si
  et~al\mbox{.}}{2020}]%
        {si2020code2inv}
\bibfield{author}{\bibinfo{person}{Xujie Si}, \bibinfo{person}{Aaditya Naik},
  \bibinfo{person}{Hanjun Dai}, \bibinfo{person}{Mayur Naik}, {and}
  \bibinfo{person}{Le Song}.} \bibinfo{year}{2020}\natexlab{}.
\newblock \showarticletitle{Code2Inv: A Deep Learning Framework for Program
  Verification}. In \bibinfo{booktitle}{\emph{{CAV} 2020}}. Springer,
  \bibinfo{pages}{151--164}.
\newblock


\bibitem[\protect\citeauthoryear{Thakur and Reps}{Thakur and Reps}{2012}]%
        {reps2012symbolic}
\bibfield{author}{\bibinfo{person}{Aditya Thakur} {and} \bibinfo{person}{Thomas
  Reps}.} \bibinfo{year}{2012}\natexlab{}.
\newblock \showarticletitle{A method for symbolic computation of abstract
  operations}. In \bibinfo{booktitle}{\emph{{CAV} 2012}}. Springer,
  \bibinfo{pages}{174--192}.
\newblock


\bibitem[\protect\citeauthoryear{Wegbreit}{Wegbreit}{1974}]%
        {wegbreit74}
\bibfield{author}{\bibinfo{person}{Ben Wegbreit}.}
  \bibinfo{year}{1974}\natexlab{}.
\newblock \showarticletitle{The synthesis of loop predicates}.
\newblock \bibinfo{journal}{\emph{Commun. ACM}} \bibinfo{volume}{17},
  \bibinfo{number}{2} (\bibinfo{year}{1974}), \bibinfo{pages}{102--113}.
\newblock


\bibitem[\protect\citeauthoryear{Weisser, Legat, Coey, Kapelevich, and
  Vielma}{Weisser et~al\mbox{.}}{2019}]%
        {weisser2019polynomial}
\bibfield{author}{\bibinfo{person}{Tillmann Weisser},
  \bibinfo{person}{Beno{\^\i}t Legat}, \bibinfo{person}{Chris Coey},
  \bibinfo{person}{Lea Kapelevich}, {and} \bibinfo{person}{Juan~Pablo Vielma}.}
  \bibinfo{year}{2019}\natexlab{}.
\newblock \showarticletitle{Polynomial and Moment Optimization in Julia and
  JuMP}. In \bibinfo{booktitle}{\emph{JuliaCon}}.
\newblock
\urldef\tempurl%
\url{https://pretalx.com/juliacon2019/talk/QZBKAU/}
\showURL{%
\tempurl}


\end{thebibliography}

\end{document}